\newtheorem{definition}{Definition}
\newtheorem{observation}{Observation}
\newtheorem{theorem}{Theorem}
\newtheorem{lemma}{Lemma}
\newtheorem{claim}{Claim}
\newcommand{\ts}{t^\star}
\newcommand{\ra}{\frac{\alpha-\beta}{\alpha-1}}
\newcommand{\rainv}{\frac{\alpha-1}{\alpha-\beta}}
\newcommand{\bs}{{\ensuremath {\hat{a}}}\xspace}
\newcommand{\expo}{1-\delta}
\newlength{\boxwidth}
\DeclareRobustCommand{\qed}{%
  \ifmmode % if math mode, assume display: omit penalty etc.
  \else \leavevmode\unskip\penalty9999 \hbox{}\nobreak\hfill
  \fi
  \quad\hbox{\qedsymbol}}
\newcommand{\openbox}{\leavevmode
  \hbox to.77778em{%
  \hfil\vrule \vbox to.675em{\hrule width.6em\vfil\hrule}%
  \vrule\hfil}}
\newcommand{\qedsymbol}{\openbox}
\newenvironment{proof}[1][\proofname]{\par \normalfont
  \topsep6\p@\@plus6\p@ \trivlist
  \item[\hskip\labelsep\bfseries\itshape #1.]\ignorespaces }{%
  \qed\endtrivlist }
\newcommand{\proofname}{Proof}
\newcommand{\hide}[1]{}
\begin{document}

\title{On the Approximation Performance of Fictitious Play in Finite Games
\thanks{Supported by EPSRC grants EP/G069239/1 and EP/G069034/1 ``Efficient Decentralised Approaches in Algorithmic
Game Theory.''}}

\author{Paul W. Goldberg\thanks{Department of Computer Science, University of Liverpool, Ashton Street, Liverpool L69 3BX, U.\,K.} \and Rahul Savani\footnotemark[2] \and Troels Bjerre S\o rensen\thanks{Department of Computer Science,
University of Warwick, Gibbet Hill Rd,
Coventry CV4 7AL, U.\ K.} \and Carmine Ventre\footnotemark[2]}
\date{}
\maketitle

\begin{abstract}
We study the performance of Fictitious Play, when used as a heuristic
for finding an approximate Nash equilibrium of a 2-player game.
We exhibit a class of 2-player games having payoffs in the range $[0,1]$
that show that Fictitious Play fails to find a solution having an additive
approximation guarantee significantly better than $1/2$.
%This compares with an upper bound of $1/2$ previously noted by Conitzer.
Our construction shows that for $n\times n$ games, in the worst case both players may
perpetually have mixed strategies whose payoffs fall short of the best response by
an additive quantity $1/2 - O(1/n^{1-\delta})$ for arbitrarily small $\delta$.
We also show an essentially matching upper bound of $1/2 - O(1/n)$.
\end{abstract}

\section{Introduction}
Fictitious Play is a very simple iterative process for computing
equilibria of games.
A detailed motivation for it is given in~\cite{Con}.
When it converges, it necessarily converges to a
Nash equilibrium. For 2-player games, it is known to converge for
zero-sum games \cite{Robinson}, %non-degenerate $2 \times 2$ games \cite{Mikasawa} 
or if one player has just 2 strategies~\cite{Ber05}. On the other
hand, Shapley exhibited a $3\times 3$ game for which it fails to converge \cite{FL,Shapley}.

Fictitious Play (FP) works as follows. Suppose that each player has a number of {\em actions},
or pure strategies. Initially (at iteration 1) each player starts with a single action.
Thereafter, at iteration $t$, each player has a sequence of $t-1$ actions which
is extended with a $t$-th action chosen as follows. Each player makes a best
response to a distribution consisting of the selection of an opponent's strategy
uniformly at random from his sequence. (To make the process precise, a tie-breaking
rule should also be specified; however, in the games constructed here, there will be no ties.)
Thus the process generates a sequence of mixed-strategy profiles (viewing the sequences
as probability distributions), and the hope is that they converge to a limiting
distribution, which would necessarily be a Nash equilibrium.

The problem of computing {\em approximate} equilibria was motivated by the
apparent intrinsic hardness of computing exact equilibria~\cite{DGP}, even in the 2-player case~\cite{CDT}.
An $\epsilon$-Nash equilibrium is one where each player's strategy has a payoff
of at most $\epsilon$ less than the best response.
Formally, for 2 players with pure strategy sets $M$, $N$ and
payoff functions $u_i : M \times N \to \mathbb{R}$ for $i \in \{1, 2\}$,
the mixed strategy $\sigma$ is an
\emph{$\epsilon$-best-response} against the mixed strategy $\tau$, if for any $m\in M$, we have
$u_1(\sigma,\tau)\ge u_1(m,\tau)-\epsilon.$
%\end{definition}
A pair of strategies $\sigma$, $\tau$ is an $\epsilon$-Nash equilibrium if they are
$\epsilon$-best responses to each other.
Typically one assumes that
the payoffs of a game are rescaled to lie in $[0,1]$, and then a general question is:
for what values of $\epsilon$ does some proposed algorithm guarantee to find
$\epsilon$-Nash equilibria? Previously, the focus has been on various
algorithms that run in polynomial time. Our result for FP applies without any limit
on the number of iterations; we show that a kind of cyclical behavior persists.
%Our bounds are for the performance of FP in two-player $n\times n$ strategic-form games.

A recent paper of Conitzer~\cite{Con} shows that FP obtains an approximation
guarantee of $\epsilon = (t+1)/2t$ for 2-player games, where $t$ is the number of
FP iterations, and furthermore, if both players have
access to infinitely many strategies, then FP cannot do better than this.
The intuition behind this upper bound %of $1/2$
is that an action that appears
most recently in a player's sequence has an $\epsilon$-value close to 0 (at most~$1/t$);
generally a strategy that occurs a fraction $\gamma$ back in the sequence has
an $\epsilon$-value of at most slightly more than $\gamma$
(it is a best response to slightly less than $1-\gamma$ of the
opponent's distribution), and the $\epsilon$-value of a player's mixed
strategy is at most the overall average, i.e., $(t+1)/2t$, which approaches $1/2$ as
$t$ increases.

However, as soon as the number of available pure strategies is exceeded by the number
of iterations of FP, various pure strategies must get re-used, and this re-usage
means, for example, that every previous occurrence of the most recent action all
have $\epsilon$-values of $1/t$. This appears to hold out the hope that FP may
ultimately guarantee a significantly better additive approximation. We show that unfortunately
that is not what results in the worst case.
Our hope is that this result may either guide the design of more ``intelligent''
dynamics having a better approximation performance, or alternatively generalize
to a wider class of related algorithms, for example the ones discussed in~\cite{DFPPV}.

In Section~\ref{sec:LB} we give our main result, the lower bound %on $\epsilon$
of $1/2-O(1/n^{1-\delta})$ for any $\delta>0$,
and in Section~\ref{sec:UB} we give the corresponding upper bound of $1/2-O(1/n)$.

\section{Lower Bound}\label{sec:LB}

%\subsection{Description of the class of games}

\begin{figure*}[htb]
\centering
\begin{tabular}{ *{5}{|c} | *{5}{|c} | *{10}{|c} | }
\hline
\hline
  &   &   &   &   &   &   &   &   &   &   &   &   &   &   &   &   &   & &  \\ \hline
1 &   &   &   &   &   &   &   &   &   &   &   &   &   &   &   &   &   &  & \\ \hline
$\beta$ & 1 &   &   &   &   &   &   &   &   &   &   &   &   &   &   &   &   &&   \\ \hline
$\beta$ & $\beta$ & 1 &   &   &   &   &   &   &   &   &   &   &   &   &   &   &   &  & \\ \hline
$\beta$ & $\beta$ & $\beta$ & 1 &   &   &   &   &   &   &   &   &   &   &   &   &   &   & &  \\ \hline \hline
$\beta$ & $\beta$ & $\beta$ & $\beta$ & $\alpha$ & 1 &   &   &   &   &   &   &   &   &   &   &   &   & &  \\ \hline
$\beta$ & $\beta$ & $\beta$ & $\beta$ & $\beta$ & $\alpha$ & 1 &   &   &   &   &   &   &   &   &   &   &   &  & \\ \hline
$\beta$ & $\beta$ & $\beta$ & $\beta$ & $\beta$ & $\beta$ & $\alpha$ & 1 &   &   &   &   &   &   &   &   &   &   &  & \\ \hline
$\beta$ & $\beta$ & $\beta$ & $\beta$ & $\beta$ & $\beta$ & $\beta$ & $\alpha$ & 1 &   &   &   &   &   &   &   &   &   &  & \\ \hline
$\beta$ & $\beta$ & $\beta$ & $\beta$ & $\beta$ & $\beta$ & $\beta$ & $\beta$ & $\alpha$ & 1 &   &   &   &   &   &   &   &   & &  \\ \hline \hline
$\beta$ & $\beta$ & $\beta$ & $\beta$ & $\beta$ & $\beta$ & $\beta$ & $\beta$ & $\beta$ & $\alpha$ & 1 &   &   &   &  &$\beta$ & $\beta$ & $\beta$ & $\beta$ & $\alpha$ \\ \hline
$\beta$ & $\beta$ & $\beta$ & $\beta$ & $\beta$  & $\beta$ &  $\beta$ & $\beta$ & $\beta$ & $\beta$ & $\alpha$ & 1 &   &   &  &  & $\beta$ &$\beta$ & $\beta$ & $\beta$ \\ \hline
$\beta$ & $\beta$ & $\beta$ & $\beta$ & $\beta$  & $\beta$  & $\beta$  & $\beta$ & $\beta$ & $\beta$ & $\beta$ & $\alpha$ & 1 &   &  &  &  & $\beta$ &$\beta$ & $\beta$ \\ \hline
$\beta$ & $\beta$ & $\beta$ & $\beta$ & $\beta$   &  $\beta$ &  $\beta$ & $\beta$  & $\beta$ & $\beta$ & $\beta$ & $\beta$ & $\alpha$ & 1 &  &  &  &  & $\beta$ &$\beta$ \\ \hline
$\beta$ & $\beta$ & $\beta$ & $\beta$ & $\beta$   &  $\beta$ &  $\beta$ & $\beta$  &  $\beta$ & $\beta$ & $\beta$ & $\beta$ & $\beta$ & $\alpha$ & 1 &  &  &  & & $\beta$\\ \hline
$\beta$ & $\beta$ & $\beta$ & $\beta$ & $\beta$  & $\beta$  &  $\beta$ & $\beta$  & $\beta$  & $\beta$  & $\beta$ & $\beta$ & $\beta$ & $\beta$ & $\alpha$ & 1 &  &  & &\\ \hline
$\beta$ & $\beta$ & $\beta$ & $\beta$ & $\beta$  & $\beta$  & $\beta$  & $\beta$  & $\beta$  &  $\beta$ &   & $\beta$ & $\beta$ & $\beta$ & $\beta$ & $\alpha$ & 1 &  & &\\ \hline
$\beta$ & $\beta$ & $\beta$ & $\beta$ & $\beta$   &  $\beta$ & $\beta$  & $\beta$  & $\beta$  & $\beta$  &   &   & $\beta$ & $\beta$ & $\beta$ & $\beta$ & $\alpha$ & 1 & &\\ \hline
$\beta$ & $\beta$ & $\beta$ & $\beta$ & $\beta$   &   $\beta$ & $\beta$ & $\beta$ & $\beta$ & $\beta$   &   &   &   & $\beta$ & $\beta$ & $\beta$ & $\beta$ & $\alpha$ & 1 &\\ \hline
$\beta$ & $\beta$ & $\beta$ & $\beta$ & $\beta$   &   $\beta$ & $\beta$ & $\beta$ & $\beta$ & $\beta$   &   &   &   &  & $\beta$ & $\beta$ & $\beta$ & $\beta$ & $\alpha$ & 1\\ \hline
\end{tabular}
\caption{The game ${\cal G}_5$ belonging to the class of games used to prove the lower bound.}\label{fig:LBgame}
\end{figure*}

%\begin{paragraph}{A parameterized class of games.}

We specify a class of games with parameter $n$, whose
general idea is conveyed in
Figure \ref{fig:LBgame}, which shows the row player's matrix
for $n=5$; the column player's matrix is its transpose. A blank entry
indicates the value zero; let $\alpha=1+\frac{1}{n^{\expo}}$ and
$\beta=1-\frac{1}{n^{2(\expo)}}$ for $\delta > 0$. Both players
start at strategy 1 (top left).
Generally, let ${\cal G}_n$ be a
$4n\times 4n$ game in which the column player's payoff matrix is the
transpose of the row player's payoff matrix $R$, which itself is specified
as follows. For $i,j\in [4n]$ we have
\begin{itemize}
\setlength{\itemsep}{0pt}
\setlength{\parskip}{0pt}
\setlength{\parsep}{0pt}
\item If $i\in [2:n]$, $R_{i,i-1}=1$. If $i\in [n+1:4n]$, $R_{i,i}=1$.
\item If $i\in [n+1:4n]$, $R_{i,i-1}=\alpha$. Also, $R_{2n+1,4n}=\alpha$.
\item Otherwise, if $i>j$ and $j\leq 2n$, $R_{i,j}=\beta$.
\item Otherwise, if $i>j$ and $i-j\leq n$, $R_{i,j}=\beta$.
If $j\in[3n+1:4n]$, $i\in [2n+1:j-n]$, $R_{i,j}=\beta$.
\item Otherwise, $R_{ij}=0$.
\end{itemize}
For ease of presentation we analyze FP on ${\cal G}_n$; the obtained results can be
seen to apply to a version of ${\cal G}_n$ with payoffs rescaled into $[0,1]$ (cf. the proof of Theorem \ref{thm:final}).
%\end{paragraph}

\subsection{Overview}

In this section we give a general overview and intuition on how our main
result works, before embarking on the technical details. Number the
strategies $1,\ldots 4n$ from top to bottom and left to right, and
assume that both players start at strategy 1. Fictitious Play proceeds
in a sequence of steps, which we index by positive integer $t$, so that
step $t$ consists of both players adding the $t$-th element to their
sequences of length $t-1$. We have the following observation:
\begin{observation}\label{obs:sym}
Since the column player's payoff matrix is the transpose of the row player's,
at every step both players play the same action.
\end{observation}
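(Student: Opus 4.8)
The plan is to prove this by induction on the step index $t$, the whole point being that the transpose relationship between the two payoff matrices forces the two players' best-response computations to become literally the same optimization problem. Write $R$ for the row player's matrix and $C=R^{\top}$ for the column player's matrix, so that $C_{i,k}=R_{k,i}$ for all $i,k$. The induction hypothesis I would carry is that after step $t-1$ the two players' action sequences coincide, say both equal $a_1,\ldots,a_{t-1}$. The base case is immediate: at step $1$ both players are stipulated to start at strategy $1$.

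For the inductive step, I would argue as follows. Because the two histories are identical through step $t-1$, each player at step $t$ is best-responding to the \emph{same} mixed strategy $p$, namely the uniform distribution over $\{a_1,\ldots,a_{t-1}\}$. Now I would simply write out the two objective functions side by side. The row player chooses the row $m$ maximizing $u_1(m,p)=\sum_j R_{m,j}\,p_j=(Rp)_m$, while the column player chooses the column $k$ maximizing $u_2(p,k)=\sum_i p_i\,C_{i,k}=\sum_i p_i\,R_{k,i}=(Rp)_k$. Both players are therefore maximizing the identical function $\ell\mapsto(Rp)_\ell$ of their action $\ell$, so their maximizers coincide and both extend their sequences with the same action.

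Finally I would note that, since these games are constructed so that no ties ever arise in the best-response computation (as remarked when Fictitious Play is set up), the common maximizer is unique, and hence the two players provably choose the \emph{same} action at step $t$; this closes the induction. The proof is essentially routine, and I do not expect a genuine obstacle: the only point requiring any care is the bookkeeping in the inductive step showing that the column player's payoff from column $k$ against $p$ equals $(Rp)_k$, which is exactly where the transpose identity $C=R^{\top}$ is invoked. Uniqueness of the best response is supplied by the specific numerical payoffs, so no explicit tie-breaking rule needs to be discussed.
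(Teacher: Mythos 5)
Your proof is correct and is exactly the inductive argument the paper leaves implicit: the observation is stated without proof, and your induction (identical histories force identical best-response problems via $C=R^{\top}$, with uniqueness of the maximizer supplied by the no-ties property of the construction) is the standard way to make it precise. No gaps.
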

This simplifies the analysis since it means we are analyzing a single
sequence of numbers (the shared indices of the actions chosen by the players).

A basic insight into the behavior of Fictitious Play on the games in
question is provided by Lemma~\ref{le:bestresponses}, which tells us a
great deal about the structure of the players' sequence. Let $s_t$ be
the action played at step $t$. We set $s_1=1$.

\begin{lemma}\label{le:bestresponses}
For any time step $t$, if $s_t \neq s_{t+1}$ then $s_{t+1} = s_t+1$  (or $s_{t+1} = 2n+1$ if $s_t = 4n$).
\end{lemma}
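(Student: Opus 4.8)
The plan is to use Observation~\ref{obs:sym} to collapse everything onto the single payoff vector $f^{(t)} := R\,p^{(t)}$, where $p^{(t)} = \tfrac1t\sum_{\tau\le t} e_{s_\tau}$ is the common empirical distribution after step $t$; then $s_{t+1} = \arg\max_i f^{(t)}(i)$, and since the construction contains no ties this maximiser is unique. I would argue by strong induction on $t$, carrying the invariant that the set of actions played so far is the contiguous prefix $\{1,\dots,s_t\}$. This invariant is exactly what the lemma for earlier steps provides: if every earlier step either repeated its action or advanced it by one, the played actions form an initial segment, so $p^{(t)}_j = 0$ for $j > s_t$. Granting this, proving the lemma at step $t$ reduces to two one-sided claims: the best response cannot retreat ($s_{t+1}\ge s_t$) and cannot leap ($s_{t+1}\le s_t+1$).

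For the no-retreat direction I would use the update $f^{(t)} = \tfrac{t-1}{t}f^{(t-1)} + \tfrac1t R_{\cdot,s_t}$ together with the fact that $s_t$ maximises $f^{(t-1)}$. It then suffices to show that within column $s_t$ the diagonal entry dominates everything strictly above it, i.e. $R_{s_t,s_t}\ge R_{i,s_t}$ for all $i<s_t$: reading off the construction, the entries above the diagonal are $0$ (or, only in the far upper-right, $\beta$), while $R_{s_t,s_t}$ is $1$ when $s_t\ge n+1$ and $0$ when $s_t\le n$ (and in that case every entry above it vanishes too), so the inequality holds in all cases. A nonnegative convex combination preserves it, giving $f^{(t)}(s_t)\ge f^{(t)}(i)$ for $i<s_t$, and uniqueness of the argmax then forbids $s_{t+1}<s_t$. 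The one place this breaks is column $4n$, where the planted entry $R_{2n+1,4n}=\alpha>1$ sits above the diagonal; this is precisely the mechanism that licenses the wraparound exception to $2n+1$.

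For the no-leap direction the prefix invariant does the work. For any $i\ge s_t+2$ and any $j\le s_t$ in the support we have $i>j$ with $i-j\ge 2$, so the large entries of row $i$ (its diagonal $1$ and subdiagonal $\alpha$, both in columns $>s_t$) never touch the support, whence $R_{i,j}\le\beta$. Comparing row $s_t+1$ against row $i$ column by column over the support, I would verify $R_{s_t+1,j}\ge R_{i,j}$ for every $j\le s_t$: this is monotonicity of each column in the row index, since for $j\le 2n$ the region below the subdiagonal is constantly $\beta$, and for $j>2n$ the $\beta$-band $\{\,j<i'\le j+n\,\}$ only shrinks as $i'$ grows. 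The inequality is strict at $j=s_t$, where $R_{s_t+1,s_t}\in\{1,\alpha\}>\beta$ and $p^{(t)}_{s_t}>0$. Hence $f^{(t)}(s_t+1)>f^{(t)}(i)$ for all $i\ge s_t+2$, which rules out a leap and leaves $s_{t+1}\in\{s_t,s_t+1\}$.

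The wraparound at $s_t=4n$ is then handled by the same two arguments, once $R_{2n+1,4n}=\alpha$ is read as the ``successor link'' playing the role that the subdiagonal $\alpha$ plays elsewhere in connecting $s_t$ to $s_t+1$. I expect the genuine obstacle to lie in the steps after the first wraparound, namely the second pass through the block $[2n+1:4n]$: there the support is no longer a clean prefix, mass already sits on columns beyond the current action, and the crude bound $R_{i,j}\le\beta$ on the support fails. The column-by-column domination must instead be recovered from the upper-right $\beta$-block (the entries $R_{i,j}=\beta$ for $j\in[3n+1:4n]$, $i\in[2n+1:j-n]$), and checking that this block is shaped so that row $s_t+1$ still dominates rows $i\ge s_t+2$ on the enlarged support is the delicate part, which I would isolate as a separate case analysis.
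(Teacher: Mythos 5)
Your overall setup (collapsing to a single sequence via Observation~\ref{obs:sym} and tracking the payoff vector $f^{(t)}=Rp^{(t)}$ through the convex update $f^{(t)}=\frac{t-1}{t}f^{(t-1)}+\frac1t R_{\cdot,s_t}$) is sound, and your no-retreat argument is essentially the paper's. The gap is in the no-leap direction. You prove $s_{t+1}\le s_t+1$ by comparing \emph{row} $s_t+1$ against rows $i\ge s_t+2$ entrywise over the support of $p^{(t)}$, which forces you to carry the invariant that the actions played so far form the prefix $\{1,\dots,s_t\}$. That invariant fails permanently once play wraps from $4n$ to $2n+1$, i.e.\ for every $t\ge\ts$ --- and this is not a corner case but the regime the whole paper lives in (Theorem~\ref{thm:tailing} and Theorem~\ref{thm:final} concern exactly $t\ge\ts$). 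You acknowledge this and defer it to a ``separate case analysis,'' but that deferred case is the actual content of the lemma; as written, the proposal establishes the claim only up to the first wraparound.

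The missing idea is that no row comparison (hence no knowledge of the support) is needed: compare each candidate $j\notin\{s_t,s_t+1\}$ against the \emph{incumbent} $s_t$ rather than against the successor $s_t+1$. Since $s_t$ was a best response at step $t$, $f^{(t-1)}(j)\le f^{(t-1)}(s_t)$, and within column $s_t$ one has $R_{s_t+1,s_t}>R_{s_t,s_t}>R_{j,s_t}$ for all $j\notin\{s_t,s_t+1\}$ (for $s_t\ge n+1$ the diagonal entry is $1$ while every other entry of the column off the subdiagonal is at most $\beta<1$). The update then gives $f^{(t)}(j)<f^{(t)}(s_t)$ for every such $j$, so the unique best response at step $t+1$ is either $s_t$ or $s_t+1$ --- with no induction on the shape of the support and no change of argument after wraparound (the case $s_t=4n$ is identical, with $R_{2n+1,4n}=\alpha$ playing the role of the subdiagonal entry). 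This is exactly the machinery you already deploy for the no-retreat half; applying it uniformly to all $j$ closes the gap and makes the prefix invariant unnecessary.
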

\begin{proof}
The first $n$ steps are similar to~\cite{Con}.
For step $t>n$, suppose the players play $s_t \neq 4n$ (by Observation \ref{obs:sym}, the two
players play the same strategy).  $s_t$ is a best response at step $t$, and
since $R_{s_t+1,s_t}>R_{s_t,s_t}>R_{j,s_t}$ ($j\not\in\{s_t,s_t+1\}$), strategy
$s_t+1$ is the only other candidate to become a better response after $s_t$ is
played.  Thus, if $s_{t+1}\not=s_t$, then $s_{t+1}=s_t+1$. Similar arguments
apply to the case $s_t=4n$.
%
%That play adds more to the value of $s_t+1$ than it does to $s_t$ or any
%other strategy, by construction of the payoff matrix. So, if $s_{t+1}\not=s_t$, then
%$s_{t+1}=s_t+1$. Similar arguments apply to the case $s_t=4n$.
%
%By definition of $s_t$ and by hypothesis, strategy $i$ is the best
%response at time $t-1$ and is not the best response anymore at time $t$.
%Note that the probability distribution at time $t$ redistributes the
%probability mass at time $t-1$ assigning more weight to the $i$-th
%strategy. Therefore, by construction of the payoff matrix, the only
%strategy that can have become better than $i$ at time $t$ is the one
%with an $\alpha$ entry in position $i$, i.e., the strategy $i+1$.
\end{proof}

The lemma implies that the sequence consists of a block of consecutive 1's
followed by some consecutive 2's, and so on through all the actions in
ascending order until we get to a block of consecutive $4n$'s. The
blocks of consecutive actions then cycle through the actions
$\{2n+1,\ldots,4n\}$ in order, and continue to do so repeatedly.

As it stands, the lemma makes no promise about the lengths of these
blocks, and indeed it does not itself rule out the possibility that one
of these blocks is infinitely long (which would end the cycling process
described above and cause FP to converge to a pure Nash equilibrium).
The subsequent results say more about the lengths of the blocks. They show
that in fact the process never converges (it cycles infinitely often)
and furthermore, the lengths of the blocks increase in geometric
progression. The parameters $\alpha$ and $\beta$ in ${\cal G}_n$ govern the
ratio between the lengths of consecutive blocks. We choose a ratio large
enough that ensures that the $n$ strategies most recently played, occupy
all but an exponentially-small fraction of the sequence. At the same
time the ratio is small enough that the corresponding probability distribution
does not allocate much probability to any individual strategy.

As an aside, we conclude with the following observation, which is not hard to check from
the structure of the game.
\begin{observation}
The game has a mixed Nash equilibrium in which both players use the uniform
distribution over strategies $\{2n+1,\ldots,4n\}$. The equilibrium has payoff
approximately $\frac{1}{2}$ to each player. There are no pure Nash equilibria,
although if both players use the same strategy in the range $\{n+1,\ldots,4n\}$
then they would receive payoff 1. Recall that $\alpha>1$, so a payoff of 1 to each player does not imply an
equilibrium.
\end{observation}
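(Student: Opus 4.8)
The plan is to reduce everything to the structure of the bottom-right $2n\times 2n$ sub-block of $R$, i.e.\ the entries $R_{i,j}$ with $i,j\in[2n+1:4n]$, since the support of the proposed equilibrium lies entirely there. First I would record that in the cyclic coordinates $k=i-2n$, $l=j-2n$ on $[1:2n]$ this block is \emph{circulant}: unwinding the defining bullets shows that row $k$ carries the value $1$ at position $k$, the value $\alpha$ at position $k-1\pmod{2n}$, the value $\beta$ at the $n-1$ positions $k-2,\dots,k-n\pmod{2n}$, and $0$ elsewhere. Here the wrap-around $\alpha$ is exactly the stipulated $R_{2n+1,4n}=\alpha$, and the wrap-around $\beta$'s are exactly the entries produced by the clause for $j\in[3n+1:4n]$. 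Because the column player's matrix is $R^{\top}$ and the candidate equilibrium is symmetric, it then suffices to verify the best-response condition for one player only.

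For the mixed equilibrium I would use the circulant structure directly. Against the uniform distribution on $\{2n+1,\dots,4n\}$, the payoff to a row $i$ in the block equals $\tfrac1{2n}$ times the sum of that row's entries inside the block, and by the circulant description every such row sums to $1+\alpha+(n-1)\beta$; hence all strategies in the support are payoff-equivalent, giving the indifference condition. For a deviation to a row $i\in[1:2n]$ I would observe that every entry $R_{i,j}$ with $i\le 2n<j$ is $0$ (no bullet ever fires in that rectangle), so such a deviation earns $0$, strictly worse. This shows the uniform distribution on $\{2n+1,\dots,4n\}$ is a symmetric Nash equilibrium, and substituting $\alpha=1+n^{-(1-\delta)}$, $\beta=1-n^{-2(1-\delta)}$ into $\tfrac1{2n}\,(1+\alpha+(n-1)\beta)=\tfrac{n+1}{2n}+o(1)$ gives the claimed payoff of approximately $\tfrac12$.

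For the remaining assertions I would compute the unique best response to each pure strategy. Since $\alpha>1>\beta>0$ and $\alpha$ is the largest entry, the best response to column $j$ is the unique row carrying that column's $\alpha$ whenever one exists, and otherwise the unique row carrying its $1$; tracing the $\alpha$-placement $R_{i,i-1}=\alpha$ together with $R_{2n+1,4n}=\alpha$ shows the best-response map is $f(j)=j+1$ for $j\neq 4n$ and $f(4n)=2n+1$. A pure profile $(i,j)$ is then an equilibrium only if $i=f(j)$ and, by symmetry, $j=f(i)$; but $i=j+1$ together with $j=i+1$ is impossible, and the two wrap-around cases force $n=1$, so for $n\ge 2$ there is no pure equilibrium. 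In particular the diagonal profiles $(i,i)$ with $i\in[n+1:4n]$ give each player $R_{i,i}=1$, yet they are not equilibria because the deviation to $f(i)=i+1$ (or to $2n+1$ when $i=4n$) earns $\alpha>1$ --- which is precisely the statement that a payoff of $1$ does not certify an equilibrium.

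The main thing needing care is the circulant/band bookkeeping at the boundary rows of the block, where the band of $\beta$'s and the single $\alpha$ wrap around modulo $2n$: one must confirm that the wrap-around clauses in the definition of $R$ supply exactly the missing entries, so that every row --- boundary rows included --- sums to $1+\alpha+(n-1)\beta$, and that each column contains at most one $\alpha$ so the best response is genuinely unique. Once the cyclic structure is pinned down, the indifference condition, the vanishing cross-block, and the best-response map $f$ all follow by direct inspection.
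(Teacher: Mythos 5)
The paper offers no proof of this observation (it is merely asserted to be ``not hard to check''), and your verification is correct and complete: the bottom-right $2n\times 2n$ block is indeed circulant with every row summing to $1+\alpha+(n-1)\beta$ (the wrap-around clauses $R_{2n+1,4n}=\alpha$ and the $j\in[3n+1:4n]$, $i\in[2n+1:j-n]$ band supply exactly the missing entries at the boundary rows), the rectangle $\{i\le 2n\}\times\{j>2n\}$ is identically zero so deviations outside the support earn $0$, and the best-response map $f(j)=j+1$, $f(4n)=2n+1$ admits no pure equilibrium for $n\ge 2$ while making the diagonal payoff-$1$ profiles non-equilibria. Nothing further is needed.
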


\subsection{The proof}

We now identify some properties of probabilities assigned to strategies by FP. We let $\ell_t(i)$ be the number of times that strategy $i$ is played by the players until time step $t$ of FP.
Let $p_t(i)$ be the corresponding probability assigned by the players to strategy $i$ at step $t$, also for any subset of actions $S$ we use $p_t(S)$ to denote the total probability of elements of $S$.
So it is immediate to observe that $$p_t(i)=\frac{\ell_t(i)}{\sum_{j=1}^{4n} \ell_t(j)}=\frac{\ell_t(i)}{t}.$$ The next fact easily follows from the FP rule.

\begin{lemma}\label{lemma:p1}
For all strategies $i \leq n$, $p_t(i)=\frac{1}{t}$ and therefore $\ell_t(i)=1$ for any time step $t\geq i$.
\end{lemma}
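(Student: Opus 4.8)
The plan is to prove that the first $n$ strategies are each played exactly once, by induction on time combined with the structural facts about the payoff matrix and Lemma~\ref{le:bestresponses}. First I would set up the base case: both players start at strategy $1$, so $s_1 = 1$. By Observation~\ref{obs:sym} both players always play the same action, so I can reason about the single shared sequence $s_1, s_2, \ldots$ and the counts $\ell_t(i)$.

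The key step is to show that once strategy $i \le n$ has been played (at some step, which I claim is step $i$), it is never played again, and moreover the sequence advances by exactly one each step through these early strategies, so that $s_t = t$ for $t \le n$. To see this, I would examine the best-response structure among the first $n$ strategies. Restricted to columns $j \le n$, column $j$ has its unique nonzero ``target'' entry $R_{j+1,j} = 1$ (for $j \le n-1$), with $\beta$-entries below it and zeros above; the diagonal-type structure here mirrors the construction in~\cite{Con}. Against the current empirical distribution supported on $\{1, \ldots, s_t\}$, the best response is the row that picks up the most recently added column's $1$-entry, which is precisely $s_t + 1$. Since the $\beta$ and $1$ contributions from a fixed distribution over the early columns never make a previously played row $i \le n$ become a best response again (the $1$ it chases has already been consumed and the remaining mass gives it only $\beta$-weighted payoff, strictly dominated by advancing), strategy $s_t + 1$ strictly beats all alternatives and the block lengths are all exactly $1$ in this initial segment. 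By Lemma~\ref{le:bestresponses} the only way to leave $s_t$ is to go to $s_t + 1$, and here we must leave at every step, giving $s_t = t$ for all $t \le n$.

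Having established $s_t = t$ for $t \le n$, the conclusion follows immediately: for $i \le n$, strategy $i$ is played exactly at step $i$ and never afterwards (subsequent best responses move strictly forward into $\{i+1, \ldots\}$ and the cycling region $\{2n+1, \ldots, 4n\}$ never revisits the low strategies, as described after Lemma~\ref{le:bestresponses}). Hence $\ell_t(i) = 1$ for every $t \ge i$, and therefore
$$p_t(i) = \frac{\ell_t(i)}{t} = \frac{1}{t},$$
which is exactly the claim.

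The main obstacle I expect is making rigorous the assertion that no early strategy $i \le n$ ever becomes a best response a second time. This requires a clean comparison of the payoff $u_1(i, \text{dist})$ against $u_1(i+1, \text{dist})$ (and against the eventual payoffs from the cycling region) for the specific empirical distributions that arise, using that the lone $1$-entry each row depends on is only hit once. The delicate point is bounding these payoffs uniformly: I would want to argue that the gap $R_{s_t+1,s_t} - R_{i,s_t}$ stays bounded away from zero for the relevant comparisons so that the strict best-response conclusion of Lemma~\ref{le:bestresponses} is genuinely forced at each of the first $n$ steps, rather than merely permitted.
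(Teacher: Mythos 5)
Your proposal is correct and follows essentially the same route as the paper: establish that the sequence is $(1,2,\ldots,t)$ for $t\leq n$ (which the paper simply delegates to the construction of~\cite{Con}, while you spell out the payoff comparison showing $s_t+1$ strictly beats all alternatives), and then invoke Lemma~\ref{le:bestresponses} to conclude that strategies $1,\ldots,n$ are never revisited, so $\ell_t(i)=1$ and $p_t(i)=1/t$. The worry in your final paragraph is largely moot: the ``never a best response again'' claim needs no further uniform payoff bound, since the monotone advancement in Lemma~\ref{le:bestresponses} (with the only backward jump being $4n\to 2n+1$) already rules out any return to the first $n$ strategies.
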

\begin{proof}
At step 1, each player sets $p_1(1)=1$ and $p_1(i)=0$ for $i>1$.
As in~\cite{Con}, for $t\leq n$ the sequence chosen by both players is $(1,2,\ldots,t)$,
so $p_t(i)=\frac{1}{i}$ for $i\leq t$ and $0$ otherwise. Lemma \ref{le:bestresponses} implies that none of the first $n$ strategies will be a best response subsequently, thus implying the claim.
\end{proof}

By Lemma \ref{le:bestresponses}, each strategy is played a number of consecutive times, in order, until the strategy $4n$ is played; at this point, this same pattern repeats but only for the strategies in $\{2n+1,\ldots, 4n\}$. We let $\ts$ be the length of the longest sequence containing all the strategies in ascending order, that is, $\ts$ is the last step of the first consecutive block of $4n$'s.
%
%We now let $t_0$ be the time step in which strategy $4n$ is first played, i.e., $t_0$ is such that $p_{t_0-1}(4n)=0$ and $p_{t_0}(4n)>0$. Strategy $4n$ is played a number of consecutive times after time step $t_0$, say $h$. (Note that there is no assumption on $h$ that can be as little as one.) We let $\ts$ be $t_0+h$, that is, $\ts$ is the first time step after $t_0$ in which strategy $4n$ is played for the last time, formally, $\ell_{\ts}(4n)=\ell_{\ts-1}(4n)+1$ and $\ell_{\ts+1}(4n)=\ell_{\ts}(4n)$.
%Finally,
We also let $t_i$ be the last time step in which $i$ is played during the first $\ts$ steps,
%i.e., $t_i$ is such that $\ell_{t_i}(i)=\ell_{t_i-1}(i)+1$ and $\ell_{t}(i)=\ell_{t-1}(i)$ for $t\in \{t_i+1,\ldots,\ts\}$.
i.e., $t_i$ is such that $\ell_{t_i}(i)=\ell_{t_i-1}(i)+1$ and $\ell_{t}(i)=\ell_{\ts}(i)$ for $t\in \{t_i,\ldots,\ts\}$.

\begin{lemma}\label{lemma:p2}
For all strategies $n+1 \leq i \leq 3n$ and all $t \in \{t_i,\ldots,\ts\}$, it holds:
\[
\ra p_t(i-1) \leq p_t(i) \leq \frac{1}{t} + \ra p_t(i-1)
\]
and therefore,
\[
\ra \ell_t(i-1) \leq \ell_t(i) \leq {1} + \ra \ell_t(i-1).
\]
\end{lemma}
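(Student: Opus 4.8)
The plan is to reduce the statement to a single assertion about the transition between consecutive blocks. First I would observe that the two displayed inequalities are equivalent: since $p_t(i)=\ell_t(i)/t$, dividing the count inequality by $t$ yields the probability inequality term by term (the additive $1$ becoming $1/t$). Moreover, for $t\in\{t_i,\ldots,\ts\}$ both $\ell_t(i)$ and $\ell_t(i-1)$ are constant: by definition $i$ is not played after step $t_i$, and since $t_{i-1}<t_i$ the strategy $i-1$ is not played after $t_{i-1}$ either. Hence it suffices to prove the count inequality $\ra\,\ell_{t_i}(i-1)\le \ell_{t_i}(i)\le 1+\ra\,\ell_{t_i}(i-1)$ at the single step $t=t_i$; constancy then extends it to all $t\in\{t_i,\ldots,\ts\}$, and division by $t$ gives the probability form.

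Next I would analyze the best responses around $t_i$. By definition $s_{t_i}=i$, so $i$ is a best response to $p_{t_i-1}$; and by Lemma~\ref{le:bestresponses} together with the maximality of $t_i$ we have $s_{t_i+1}=i+1$, so $i+1$ is the (strict, as there are no ties) best response to $p_{t_i}$. In particular $u_1(i,p_{t_i-1})\ge u_1(i+1,p_{t_i-1})$ and $u_1(i+1,p_{t_i})> u_1(i,p_{t_i})$. The heart of the argument is a clean formula for the payoff gap at these two steps, valid for $i\le 3n$:
\[
u_1(i+1,p)-u_1(i,p)=(\alpha-1)\,p(i)-(\alpha-\beta)\,p(i-1).
\]
To obtain it I would compare rows $i$ and $i+1$ of $R$ column by column. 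Column $i$ contributes $(R_{i+1,i}-R_{i,i})p(i)=(\alpha-1)p(i)$ and column $i-1$ contributes $(R_{i+1,i-1}-R_{i,i-1})p(i-1)=(\beta-\alpha)p(i-1)$. Every column $j\ge i+1$ contributes nothing because at both steps $t_i-1$ and $t_i$ these strategies have not yet been played in the first pass, so $p(j)=0$; in particular the diagonal term $R_{i+1,i+1}p(i+1)$ vanishes.

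The step I expect to be the main obstacle is showing that every column $j\le i-2$ contributes zero, i.e.\ that $R_{i+1,j}=R_{i,j}$ there. For $j\le 2n$ both entries equal $\beta$ by the third defining rule. For $j>2n$ the rule $i-j\le n$ governs the entries, and $R_{i,j}$ and $R_{i+1,j}$ could disagree only at the boundary $j=i-n$; but $j=i-n>2n$ would force $i>3n$, which is excluded by hypothesis. This is precisely where the restriction $i\le 3n$ is used, and it is what makes the gap formula free of correction terms.

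Finally I would read off the two bounds. Evaluating the gap formula at $p_{t_i-1}$ and using $u_1(i+1,p_{t_i-1})\le u_1(i,p_{t_i-1})$ gives, after multiplying through by $t_i-1$ and dividing by $\alpha-1>0$, the inequality $\ell_{t_i-1}(i)\le\ra\,\ell_{t_i-1}(i-1)$. Since step $t_i$ plays $i$ (so $\ell_{t_i}(i)=\ell_{t_i-1}(i)+1$) while leaving $i-1$ untouched (so $\ell_{t_i}(i-1)=\ell_{t_i-1}(i-1)$), this yields the upper bound $\ell_{t_i}(i)\le 1+\ra\,\ell_{t_i}(i-1)$. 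Symmetrically, evaluating the gap formula at $p_{t_i}$ and using $u_1(i+1,p_{t_i})>u_1(i,p_{t_i})$ gives, after scaling by $t_i$, the lower bound $\ell_{t_i}(i)>\ra\,\ell_{t_i}(i-1)$. Together with the reduction in the first paragraph this proves the lemma.
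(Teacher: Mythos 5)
Your proposal is correct and follows essentially the same route as the paper: it compares the payoffs of rows $i$ and $i+1$ at steps $t_i-1$ and $t_i$ (where $i$ is, respectively, still and no longer a best response), extracts the ratio $\ra$ from the resulting inequalities, and extends to all $t\in\{t_i,\ldots,\ts\}$ via the constancy of $\ell_t(i)$ and $\ell_t(i-1)$. Your explicit column-by-column justification that only columns $i-1$ and $i$ contribute to the payoff gap (and that this is exactly where $i\le 3n$ is needed) is a detail the paper leaves implicit, but the argument is the same.
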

\begin{proof}
By definition of $t_i$, strategy $i$ is played at step $t_i$. This means that $i$ is a best response for the players given the probability distributions at step $t_i-1$. In particular, the expected payoff of $i$ is better than the expected payoff of $i+1$, that is,
\begin{align*}
& \beta \sum_{j=1}^{i-2} p_{t_i-1}(j) + \alpha p_{t_i-1}(i-1)+p_{t_i-1}(i) \geq \\ & \beta \sum_{j=1}^{i-2} p_{t_i-1}(j) + \beta p_{t_i-1}(i-1)+\alpha p_{t_i-1}(i).
\end{align*}
Since $\alpha>1$, the above implies that $p_{t_i-1}(i) \leq \ra p_{t_i-1}(i-1)$.
By explicitly writing the probabilities, we get
\begin{align}
& \frac{\ell_{t_i-1}(i)}{t_i-1} \leq \ra \frac{\ell_{t_i-1}(i-1)}{t_i-1}   \Longleftrightarrow \nonumber \\ & \ell_{t_i}(i) - 1 \leq \ra \ell_{t_i}(i-1) \label{eq:p2:ell:upperbound} \hspace{2ex} \Longleftrightarrow \\ & \frac{\ell_{t_i}(i)}{t_i} \leq \frac{1}{t_i} + \ra \frac{\ell_{t_i}(i-1)}{t_i} \nonumber  \Longleftrightarrow \\ & p_{t_i}(i)\leq \frac{1}{t_i} + \ra p_{t_i}(i-1).\label{eq:p2:probs:upperbound}
\end{align}

At step $t_i+1$ strategy $i$ is not a best response to the opponent's strategy. Then, by Lemma \ref{le:bestresponses}, $i+1$ is the unique best response and so the expected payoff of $i+1$ is better than the expected payoff of $i$ given the probability distributions at step $t_i$, that is,
\begin{align*}
& \beta \sum_{j=1}^{i-2} p_{t_i}(j) + \alpha p_{t_i}(i-1)+p_{t_i}(i) \leq \\ & \beta \sum_{j=1}^{i-2} p_{t_i}(j) + \beta p_{t_i}(i-1)+\alpha p_{t_i}(i).
\end{align*}
Since $\alpha>1$, the above implies that
\begin{equation}\label{eq:p2:probs:lowerbound}
p_{t_i}(i) \geq \ra p_{t_i}(i-1),
\end{equation}
and then that
\begin{equation}\label{eq:p2:ell:lowerbound}
\ell_{t_i}(i) \geq \ra \ell_{t_i}(i-1).
\end{equation}
By definition of $t_i$ action $i$ will not be played anymore until time step $\ts$. Similarly, Lemma \ref{le:bestresponses} shows that $i-1$ will not be a best response twice in the time interval $[1,\ts]$ and so will not be played until step $\ts$. Therefore, the claim follows from (\ref{eq:p2:ell:upperbound}), (\ref{eq:p2:probs:upperbound}), (\ref{eq:p2:probs:lowerbound}) and (\ref{eq:p2:ell:lowerbound}).
\end{proof}

\newcommand{\rb}{\frac{\beta}{\alpha-1}}

\begin{lemma}\label{lemma:p3}
For all strategies $i\in\{3n+1,\ldots,4n-1\}$ and all $t\in \{t_i,\ldots,\ts\}$, it holds:
\[
\ra p_t(i-1) \leq p_t(i) \leq \frac{1}{t} + \ra p_t(i-1) + \rb p_t(i-n)
\]
and therefore,
\[
\ra \ell_t(i-1) \leq \ell_t(i) \leq {1} + \ra \ell_t(i-1)+ \rb \ell_t(i-n).
\]
\end{lemma}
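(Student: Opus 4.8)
The plan is to mirror the proof of Lemma \ref{lemma:p2} almost verbatim; the one new ingredient is a mismatch between the $\beta$-bands of rows $i$ and $i+1$ at column $i-n$. By the definition of $t_i$, strategy $i$ is played at step $t_i$, hence it is a best response against $p_{t_i-1}$, and in particular its expected payoff is at least that of strategy $i+1$. During the first $\ts$ steps the actions are played in ascending blocks (Lemma \ref{le:bestresponses}), so $p_{t_i-1}$ is supported on $\{1,\ldots,i\}$, and I only need the entries $R_{i,j}$ and $R_{i+1,j}$ for $j\le i$.

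First I would tabulate those entries. For $i\in\{3n+1,\ldots,4n-1\}$ both rows carry $\beta$ on $\{1,\ldots,2n\}$ (the rule $j\le 2n$), a (possibly empty) zero region just above $2n$, a $\beta$-band coming from the rule $i-j\le n$, and the distinguished entries $R_{i,i}=1$ and $R_{i,i-1}=R_{i+1,i}=\alpha$. The crucial point is the left edge of the band: row $i$ has $R_{i,i-n}=\beta$ because $i-(i-n)=n\le n$, whereas row $i+1$ has $R_{i+1,i-n}=0$, since $(i+1)-(i-n)=n+1>n$ and $i-n>2n$ lies outside the $j\le 2n$ region. All other columns in $\{1,\ldots,i\}$ agree between the two rows, so subtracting the two expected payoffs against a distribution $p$ leaves exactly
\[
\beta\,p(i-n)+(\alpha-\beta)\,p(i-1)-(\alpha-1)\,p(i).
\]
The best-response inequality at step $t_i$ says this quantity is nonnegative for $p=p_{t_i-1}$, which (using $\alpha>1$) rearranges to $p_{t_i-1}(i)\le \ra\,p_{t_i-1}(i-1)+\rb\,p_{t_i-1}(i-n)$.

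From here the conversion to counts is identical to Lemma \ref{lemma:p2}: multiplying by $t_i-1$ and using $\ell_{t_i}(i)=\ell_{t_i-1}(i)+1$ together with $\ell_{t_i-1}(i-1)=\ell_{t_i}(i-1)$ and $\ell_{t_i-1}(i-n)=\ell_{t_i}(i-n)$ (neither $i-1$ nor $i-n$ is played at step $t_i$) yields $\ell_{t_i}(i)\le 1+\ra\,\ell_{t_i}(i-1)+\rb\,\ell_{t_i}(i-n)$, and dividing by $t_i$ produces the $\frac{1}{t}$ term of the stated upper bound. For the lower bound I would look one step later: at step $t_i+1$ strategy $i$ is no longer a best response while $i+1$ is (Lemma \ref{le:bestresponses}), so the same difference evaluated at $p=p_{t_i}$ is nonpositive, giving $p_{t_i}(i)\ge \ra\,p_{t_i}(i-1)+\rb\,p_{t_i}(i-n)\ge \ra\,p_{t_i}(i-1)$, the last step merely dropping the nonnegative term. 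Finally, since none of $i$, $i-1$, $i-n$ is replayed between $t_i$ and $\ts$, the counts $\ell_t(i),\ell_t(i-1),\ell_t(i-n)$ are constant on $\{t_i,\ldots,\ts\}$, so both the count inequalities and, after dividing by $t$, the probability inequalities persist for every $t$ in that range.

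The only real work --- and the place where this lemma genuinely departs from Lemma \ref{lemma:p2} --- is the bookkeeping of the matrix entries above: one has to see that for $i\ge 3n+1$ the column $i-n$ has climbed past the $j\le 2n$ block, so it sits inside row $i$'s $\beta$-band but outside row $i+1$'s, producing the extra $\rb\,p(i-n)$. For $i\le 3n$ that column still satisfies $i-n\le 2n$ and both rows carry $\beta$ there, which is precisely why Lemma \ref{lemma:p2} has no such term. Everything else is the routine arithmetic already carried out in Lemma \ref{lemma:p2}.
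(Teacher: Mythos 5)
Your proposal is correct and follows essentially the same route as the paper: compare the expected payoffs of rows $i$ and $i+1$ against $p_{t_i-1}$ (for the upper bound) and against $p_{t_i}$ (for the lower bound), observe that the only non-cancelling columns are $i-n$, $i-1$ and $i$, and then transfer the resulting inequality between probabilities and counts and propagate it to all $t\in\{t_i,\ldots,\ts\}$ using the fact that none of $i$, $i-1$, $i-n$ is replayed before $\ts$. Your explicit bookkeeping of the matrix entries (in particular that $R_{i,i-n}=\beta$ while $R_{i+1,i-n}=0$ once $i-n>2n$) is exactly the content of the paper's displayed payoff comparison, just spelled out more carefully.
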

\begin{proof}
By definition of $t_i$, strategy $i$ is played at time step $t_i$. This means that $i$ is a best response for the players after $t_i-1$ steps. In particular, the expected payoff of $i$ is better than the expected payoff of $i+1$, that is,
\begin{align*}
& \beta \left(\sum_{j=1}^{2n} p_{t_i-1}(j) + \sum_{j=i-n}^{i-2} p_{t_i-1}(j)\right) + \alpha p_{t_i-1}(i-1)+ \\ & p_{t_i-1}(i) \geq \beta \left(\sum_{j=1}^{2n} p_{t_i-1}(j) + \sum_{j=i-n+1}^{i-2} p_{t_i-1}(j)\right) + \\ &  \beta p_{t_i-1}(i-1)+\alpha p_{t_i-1}(i).
\end{align*}
Since $\alpha>1$, the above implies that $p_{t_i-1}(i) \leq \ra p_{t_i-1}(i-1) + \rb p_{t_i-1}(i-n)$. Similarly to the proof of Lemma \ref{lemma:p2} above this can be shown to imply
\begin{align}
p_{t_i}(i) & \leq \frac{1}{t_i} + \ra p_{t_i}(i-1) + \rb  p_{t_i}(i-n), \label{eq:p3:probs:upperbound}\\
\ell_{t_i}(i) & \leq 1+ \ra \ell_{t_i}(i-1) + \rb  \ell_{t_i}(i-n). \label{eq:p3:ell:upperbound}
\end{align}

At time step $t_i+1$ strategy $i$ is not a best response to the opponent's strategy. By Lemma \ref{le:bestresponses}, $i+1$ is the unique best response and so the expected payoff of $i+1$ is better than the expected payoff of $i$, thus implying that
\begin{align}
p_{t_i}(i) & \geq \ra p_{t_i}(i-1) +\rb p_{t_i}(i-n) \nonumber \\ & \geq \ra p_{t_i}(i-1), \label{eq:p3:probs:lowerbound}\\
\ell_{t_i}(i) & \geq \ra \ell_{t_i}(i-1) + \rb \ell_{t_i}(i-n) \nonumber \\ & \geq \ra \ell_{t_i}(i-1). \label{eq:p3:ell:lowerbound}
\end{align}
Similarly to Lemma \ref{lemma:p2}, the claim follows from (\ref{eq:p3:probs:upperbound}), (\ref{eq:p3:ell:upperbound}), (\ref{eq:p3:probs:lowerbound}) and (\ref{eq:p3:ell:lowerbound}), the definition of $t_i$ and the fact that, by Lemma \ref{le:bestresponses}, a strategy belonging to $\{3n+1,\ldots,4n-1\}$ is never twice in time a best response in the time interval $[1,\ts]$.
\end{proof}

The next lemma shows that we can ``forget'' about the first $2n$ actions at the cost of paying an exponentially small addend in the payoff function.

\begin{lemma}\label{lemma:forgetthepast}
For any $\delta>0$, $\alpha=1+\frac{1}{n^{\expo}}$ and $\beta = 1-\frac{1}{n^{2(\expo)}}$, $\sum_{j=1}^{2n} p_{\ts}(j) \leq 2^{-n^{\delta}}$.
\end{lemma}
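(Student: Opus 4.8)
The plan is to express the target quantity as a ratio of occurrence counts and then exploit the geometric growth guaranteed by Lemmas~\ref{lemma:p2} and~\ref{lemma:p3}. Since $p_{\ts}(j)=\ell_{\ts}(j)/\ts$ and $\ts=\sum_{j=1}^{4n}\ell_{\ts}(j)$,
\[
\sum_{j=1}^{2n}p_{\ts}(j)=\frac{\sum_{j=1}^{2n}\ell_{\ts}(j)}{\sum_{j=1}^{4n}\ell_{\ts}(j)},
\]
so it suffices to show that the counts of the higher\hyph indexed strategies $\{2n+1,\dots,4n\}$ overwhelm those of $\{1,\dots,2n\}$. The engine is the common lower bound of the two lemmas: evaluating at $t=\ts$ (which lies in $\{t_i,\dots,\ts\}$ for every $i$) they both yield $\ell_{\ts}(i)\ge\ra\,\ell_{\ts}(i-1)$ for every $i\in\{n+1,\dots,4n-1\}$, and in Lemma~\ref{lemma:p3} the extra term $\rb\,\ell_{\ts}(i-n)$ only strengthens this. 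Substituting $\alpha=1+\tfrac1{n^{\expo}}$ and $\beta=1-\tfrac1{n^{2(\expo)}}$ shows the growth factor is exactly $\ra=\alpha=1+\tfrac1{n^{\expo}}$, which is the calibration driving the whole estimate.

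First I would bound the numerator. By Lemma~\ref{lemma:p1}, $\ell_{\ts}(j)=1$ for $j\le n$, while for $n+1\le j\le 2n$ the inequality $\ell_{\ts}(j)\ge\alpha\,\ell_{\ts}(j-1)$ with $\alpha>1$ makes $\ell_{\ts}(\cdot)$ nondecreasing on $\{1,\dots,2n\}$; hence every term is at most $\ell_{\ts}(2n)$ and $\sum_{j=1}^{2n}\ell_{\ts}(j)\le 2n\,\ell_{\ts}(2n)$. Next I would bound the denominator below by a single term deep in the cycling region: iterating $\ell_{\ts}(i)\ge\alpha\,\ell_{\ts}(i-1)$ from $i=2n+1$ to $i=4n-1$ gives $\ts\ge\ell_{\ts}(4n-1)\ge\alpha^{2n-1}\ell_{\ts}(2n)$. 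The factor $\ell_{\ts}(2n)$ then cancels, leaving
\[
\sum_{j=1}^{2n}p_{\ts}(j)\le\frac{2n\,\ell_{\ts}(2n)}{\alpha^{2n-1}\ell_{\ts}(2n)}=\frac{2n}{\alpha^{2n-1}}.
\]

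It then remains to check that $2n/\alpha^{2n-1}\le 2^{-n^{\delta}}$, which I expect to be the only delicate point. Using $\ln\alpha=\ln\!\bigl(1+n^{-(\expo)}\bigr)\ge n^{-(\expo)}-\tfrac12 n^{-2(\expo)}$ gives $(2n-1)\ln\alpha=2n^{\delta}-o(n^{\delta})$, so $\alpha^{2n-1}=e^{(2-o(1))n^{\delta}}$; since $2n^{\delta}$ comfortably exceeds $n^{\delta}\ln2+\ln(2n)$ for large $n$, the inequality $\alpha^{2n-1}\ge 2n\cdot 2^{n^{\delta}}$ follows. The substance of the obstacle is exactly that the geometric ratio $\alpha$ is tuned against the $2n$ strategies traversed so that $\alpha^{2n}\approx e^{2n^{\delta}}$ is genuinely exponential in $n^{\delta}$: any smaller ratio would fail to make this tail negligible, which is precisely the purpose of the parameter choice in ${\cal G}_n$.
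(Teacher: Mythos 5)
Your proof is correct and follows essentially the same route as the paper: write $\sum_{j=1}^{2n}p_{\ts}(j)$ as a ratio of occurrence counts and beat the numerator with the geometric growth $\ell_{\ts}(i)\ge\ra\,\ell_{\ts}(i-1)=\alpha\,\ell_{\ts}(i-1)$ supplied by Lemmata~\ref{lemma:p2} and~\ref{lemma:p3}. The one substantive difference is how the numerator is compared to the denominator. The paper passes from $\frac{\sum_{j=1}^{2n}\ell_{\ts}(j)}{\sum_{j=1}^{4n}\ell_{\ts}(j)}$ to $\frac{1}{\sum_{j=2n+1}^{4n}\ell_{\ts}(j)}$ and then anchors the chain of inequalities at $\ell_{\ts}(n)=1$, which implicitly treats the numerator as if it were $1$; you instead bound the numerator by $2n\,\ell_{\ts}(2n)$ (using monotonicity of $\ell_{\ts}$ on $\{1,\dots,2n\}$) and the denominator below by $\alpha^{2n-1}\ell_{\ts}(2n)$, so the unknown quantity $\ell_{\ts}(2n)$ cancels. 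This is the more careful accounting, and it is why your exponent is roughly $2n^{\delta}$ rather than the paper's $3n^{\delta}$: you spend $n$ of the $3n-1$ geometric factors absorbing the numerator. The only mild cost is that your final check $2n/\alpha^{2n-1}\le 2^{-n^{\delta}}$ genuinely needs $n$ sufficiently large (to swallow the $\ln(2n)$ term), whereas the lemma is stated without that qualifier; since the surrounding results (e.g.\ Theorem~\ref{thm:tailing}) already assume $n$ large, this does not affect how the lemma is used.
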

\begin{proof}
We first rewrite and upper bound the sum of the probabilities we are interested in:
\begin{align*}
  \sum_{j=1}^{2n} p_{\ts}(j) &= \sum_{j=1}^{2n} \left[\frac{\ell_{\ts}(j)}{\sum_{j=1}^{4n}\ell_{\ts}(j)}\right] = \frac{\sum_{j=1}^{2n} \ell_{\ts}(j)}{\sum_{j=1}^{4n}\ell_{\ts}(j)} \\ &= \frac{1}{\sum_{j=2n+1}^{4n}\ell_{\ts}(j)} \leq \frac{1}{\ell_{\ts}(4n-1)}.
\end{align*}
Note that by Lemmata \ref{lemma:p1}, \ref{lemma:p2} and \ref{lemma:p3} we have that
\begin{align*}
  \ell_{\ts}(4n-1) & \geq \ra \ell_{\ts}(4n-2) \geq \left(\ra\right)^2 \ell_{\ts}(4n-3)\\ &\geq \left(\ra\right)^{3n-1} \ell_{\ts}(n) = \left(\ra\right)^{3n-1}.
\end{align*}
By plugging in the values of $\alpha$ and $\beta$ given in the hypothesis we have that
\begin{align*}
  \sum_{j=1}^{2n} p_{\ts}(j) & \leq  \frac{1}{\left(\left(1+\frac{1}{n^{\expo}}\right)^{n^{\expo}}\right)^{3n^{\delta}-\frac{1}{n^{\delta}}}} %\\ &
  \leq \frac{1}{2^{3n^{\delta}-\frac{1}{n^{\delta}}}}
  \\ &
  \leq \frac{1}{2^{n^{\delta}}},
\end{align*}
where the penultimate inequality follows from the observation that the function $(1+1/x)^x>2$ for $x>2$.
\end{proof}

The theorem below generalizes the above arguments to the cycles that FP visits in the last block of the game, i.e., the block which comprises strategies $S=\{2n+1,\ldots,4n\}$. Since we focus on this part of the game, to ease the presentation, our notation uses circular arithmetic on the elements of $S$. For example, the action $j+2$ will denote action $2n+2$ for $j=4n$ and the action $j-n$ will be the strategy $3n+1$ for $j=2n+1$. Note that under this notation $j-2n=j+2n=j$ for each action $j$ in the block.

\begin{theorem}\label{thm:tailing}
For any $\delta>0$, $\alpha=1+\frac{1}{n^{\expo}}$ and $\beta = 1-\frac{1}{n^{2(\expo)}}$, $n$ sufficiently large, any $t \geq \ts$ we have
\begin{align*}
\frac{p_t(i)}{p_t(i-1)} & \geq 1+\frac{1}{n^{\expo}} \mbox{ for all $i\in S$ with $i \neq s_t, s_t+1$,}\\
\frac{p_t(i)}{p_t(i-1)} & \leq 1+\frac{3}{n^{\expo}} \mbox{ for all $i\in S$}.
\end{align*}
\end{theorem}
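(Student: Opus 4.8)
The plan is to run a single induction on $t\ge\ts$ that carries both inequalities as a joint invariant, reducing everything to what happens at \emph{block transitions}. The first observation is that $p_t(i)/p_t(i-1)=\ell_t(i)/\ell_t(i-1)$ is insensitive to renormalization, so from step $t$ to $t+1$ only the two ratios indexed by $i\in\{s_{t+1},s_{t+1}+1\}$ change: the one whose numerator's count is incremented grows, the one whose denominator's is incremented shrinks. Hence each ratio stays frozen at its end-of-block value except while $i$ or $i-1$ is the action currently being played, and these are exactly the excluded cases $i=s_t$ and $i=s_t+1$. So it suffices to (a) show the ratio is $\ge\ra=1+\frac{1}{n^{\expo}}$ when block $i$ is completed, (b) show it never exceeds $1+\frac{3}{n^{\expo}}$, and (c) note that while block $i=s_t$ is in progress $p_t(i)/p_t(i-1)$ only increases (numerator's count grows, denominator's is frozen) so it is dominated by its end-of-block value.

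The engine is the best-response condition read off in the cyclic block $S$, exactly as in Lemmas~\ref{lemma:p2} and~\ref{lemma:p3}. Writing out $u(i,\cdot)-u(i+1,\cdot)$ for $i\in S$ from the payoff description of $\mathcal{G}_n$ (the $1$ on the diagonal, the $\alpha$ just below it, the band of $\beta$'s, and the ``wrap-around'' $\beta$'s from the last clause of the definition) gives, up to boundary terms at the seam $\{2n+1,4n\}$,
\[
u(i,p_t)-u(i+1,p_t)=(\alpha-\beta)p_t(i-1)-(\alpha-1)p_t(i)-p_t(i+1)+\beta\,p_t(j_i),
\]
where $j_i=i+n$ for $i\le 3n$ and $j_i=i-n$ for $i>3n$ (circular indices in $S$). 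When block $i$ is completed, $i+1$ becomes the best response, so the left-hand side is $\le 0$, which rearranges to
\[
p_t(i)\ \ge\ \ra\,p_t(i-1)+\tfrac{1}{\alpha-1}\bigl(\beta\,p_t(j_i)-p_t(i+1)\bigr).
\]
To reach the lower bound it remains to show $\beta\,p_t(j_i)\ge p_t(i+1)$. Here $i+1$ is the oldest (least played) action and $j_i$ lies about $n$ steps newer along the chain $i+1,i+2,\dots$; none of the intervening ratios involve the excluded indices $i,i+1$, so the induction hypothesis yields $p_t(j_i)\ge(\ra)^{\,n-1}p_t(i+1)$, and $\beta(\ra)^{\,n-1}\ge 1$ for $n$ large. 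This closes the lower bound.

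For the upper bound, at any step at which $i=s_t$ is (still) played we have $u(i,p_t)\ge u(i+1,p_t)$; dropping the favourable $-p_t(i+1)$ term gives $p_t(i)\le\ra\,p_t(i-1)+\rb\,p_t(j_i)$, equivalently $\ell_t(i)\le 1+\ra\,\ell_t(i-1)+\rb\,\ell_t(j_i)$. Now $j_i$ is about $n$ steps \emph{older} than $i-1$, so the same chain of already-established lower bounds gives $p_t(j_i)\le(\ra)^{-(n-1)}p_t(i-1)$, whence $\rb\,p_t(j_i)\le\rb(\ra)^{-(n-1)}p_t(i-1)\le\tfrac{1}{n^{\expo}}p_t(i-1)$ because $\rb(\ra)^{-(n-1)}=\beta n^{\expo}(1+n^{-\expo})^{-(n-1)}\to 0$. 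Dividing the count inequality by $\ell_t(i-1)$ introduces the discretization term $1/\ell_t(i-1)$, which is at most $1/n^{\expo}$ since the geometric growth forces every count in $S$ to exceed $(\ra)^n\ge 2^{n^{\delta}}$ (the estimate of Lemma~\ref{lemma:forgetthepast}). Adding the three contributions yields $\ra+\tfrac{1}{n^{\expo}}+\tfrac{1}{n^{\expo}}=1+\tfrac{3}{n^{\expo}}$.

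The base case $t=\ts$ is supplied directly by Lemmas~\ref{lemma:p1}--\ref{lemma:forgetthepast}: at $\ts$ we have $s_{\ts}=4n$, the lower bounds for $i\in\{2n+2,\dots,4n-1\}$ are the lower-bound halves of Lemmas~\ref{lemma:p2} and~\ref{lemma:p3}, their upper-bound halves (with the wrap-around term again controlled through Lemma~\ref{lemma:forgetthepast}) give the upper bounds, and the two seam indices are precisely the excluded ones. I expect the main obstacle to be exactly the interdependence of the two inequalities: the lower bound is what makes the wrap-around term $\beta\,p_t(j_i)$ negligible in the upper bound, while controlling that same term so it dominates $p_t(i+1)$ is what makes the lower bound hold. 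This is why the proof must be organised as one induction on $t$ rather than as two separate arguments, and the remaining (routine but fiddly) work is the bookkeeping for the special $\alpha$-entries at the seam $\{2n+1,4n\}$.
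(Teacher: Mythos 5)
Your proposal follows essentially the same route as the paper's proof: a single induction on $t$ carrying both bounds as a joint invariant, the observation that only the two ratios touching $s_t$ move at each step, the best-response comparison between $i$ and $i+1$ read off from the circulant structure of the last block, the geometric chain of already-certified ratios used both to make the wrap-around term $\rb\, p_t(j_i)$ negligible in the upper bound and to make it dominate $p_t(i+1)$ in the lower bound, and Lemma~\ref{lemma:forgetthepast} to absorb the contribution of the first $2n$ actions. The one place where your argument has a genuine hole is the base case at $t=\ts$, for the two seam ratios. You write that at $\ts$ ``the two seam indices are precisely the excluded ones,'' but they are excluded only from the \emph{lower} bound; the upper bound is asserted for \emph{all} $i\in S$, so you still owe $\frac{p_{\ts}(4n)}{p_{\ts}(4n-1)}\le 1+\frac{3}{n^{\expo}}$ and $\frac{p_{\ts}(2n+1)}{p_{\ts}(4n)}\le 1+\frac{3}{n^{\expo}}$, and neither is supplied by Lemmas~\ref{lemma:p2} and~\ref{lemma:p3}, which stop at $i=4n-1$ and never compare an action with its \emph{circular} predecessor.

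The first of these does follow from your generic end-of-block argument applied to the block of $4n$'s ending at $\ts$. The second does not: the ratio $p_t(2n+1)/p_t(4n)$ is not ``frozen at a previously certified value'' --- when $4n$ is first played it equals $\ell_t(2n+1)/1$, which is exponentially large, and it only decreases during the block of $4n$'s, so your monotonicity argument gives no usable upper bound at $\ts$; moreover, your best-response identity for $i=2n+1$ during the first traversal compares $2n+1$ with $2n+2$ through the entry $R_{2n+1,2n}=\alpha$, i.e.\ it controls $p(2n+1)$ against $p(2n)$, not against $p(4n)$. The paper closes exactly this case with a separate claim: the switch at step $\ts+1$ (where $2n+1$ overtakes $4n$) yields a \emph{lower} bound $\frac{p_{\ts}(4n)}{p_{\ts}(4n-1)}> 1$, hence $p_{\ts}(4n)>p_{\ts}(4n-1)$, and chaining the lower bounds of Lemmas~\ref{lemma:p2} and~\ref{lemma:p3} gives $p_{\ts}(2n+1)<p_{\ts}(4n-1)<p_{\ts}(4n)$, so the seam ratio is below $1$. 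You need to add this (or an equivalent) step; the rest of your outline matches the paper's proof, including the interdependence of the two bounds that you correctly identify as the reason for running one joint induction.
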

\begin{proof}
The proof is by induction on $t$.

\medskip
\noindent{\bf Base.} For the base of the induction, consider $t=\ts$ and note that at that point $s_{\ts}=4n$ and $s_{\ts}+1=2n+1$. Therefore we need to show the lower bound for any strategy $i \in \{2n+2,\ldots, 4n-1\}$. From Lemmata \ref{lemma:p2} and \ref{lemma:p3} we note that for $i \neq 4n, 2n+1$,
\[
\frac{p_{\ts}(i)}{p_{\ts}(i-1)} \geq \ra = 1+\frac{1}{n^{\expo}}.
\]
As for the upper bound, we first consider the case of $i \neq 4n, 2n+1$. Lemma \ref{lemma:p2} implies that for $i=2n+2,\ldots,3n$,
\[
\frac{p_{\ts}(i)}{p_{\ts}(i-1)} \leq \frac{1}{\ts} + \ra,
\]
while Lemma \ref{lemma:p3} implies that for $i=3n+1,\ldots,4n-1$,
\begin{align*}
\frac{p_{\ts}(i)}{p_{\ts}(i-1)} &\leq \frac{1}{\ts} + \ra + \rb \frac{p_{\ts}(i-n)}{p_{\ts}(i-1)} \\ &= \frac{1}{\ts} + \ra + \rb \frac{\ell_{\ts}(i-n)}{\ell_{\ts}(i-1)}.
\end{align*}
To give a unique upper bound for both cases, we only focus on the above (weaker) upper bound and next are going to focus on the ratio $\frac{\ell_{\ts}(i-n)}{\ell_{\ts}(i-1)}$. We use Lemmata \ref{lemma:p2} and \ref{lemma:p3} and get
\begin{align*}
\ell_{\ts}(i-1) & \geq \ra \ell_{\ts}(i-2) \geq \left(\ra\right)^2 \ell_{\ts}(i-3) \\ &\geq \left(\ra\right)^{n-1} \ell_{\ts}(i-n).
\end{align*}
%where last equality follows from Lemma \ref{lemma:p1} and $k$ is such that $i-1-k=n$. We now notice that $i-n \leq 3n$ and so we can use Lemma \ref{lemma:p2} to upper bound $\ell_{\ts}(i-n)$. Namely,
%\begin{align*}
%  \ell_{\ts}(i-n) & \leq 1+\ra \ell_{\ts}(i-n-1) \leq 1+ \ra \left(1+\ra\ell_{\ts}(i-n-2) \right) \\ & \leq 1 + \ra+\left(\ra\right)^{2} \left(1+\ra\ell_{\ts}(i-n-3)\right) \leq \sum_{j=0}^{z} \left(\ra\right)^{j} \\ & \leq z \left(\ra\right)^{z},
%\end{align*}
%where last inequality follows from the fact that $\ra > 1$ and where $z$ is such that $i-n-z=n$.
%Therefore,
%\begin{align*}
%\frac{\ell_{\ts}(i-n)}{\ell_{\ts}(i-1)} & \leq \frac{(i-2n)\left(\ra\right)^{i-2n}}{\left(\ra\right)^{i-n-1}} \leq \frac{(2n-1)}{\left(\ra\right)^{n-1}} = (2n-1)\left(\rainv\right)^{n-1}.
%\end{align*}
By setting $\alpha$ and $\beta$ as in the hypothesis and noticing that $\ts \geq n \geq n^{\expo}$ we then obtain that
\[
\frac{p_{\ts}(i)}{p_{\ts}(i-1)} \leq 1 + \frac{2}{n^{\expo}} + \left( 1+\frac{1}{n^{\expo}}\right)^{1-n} \frac{n^{2(\expo)}-1}{n^{\expo}}.
\]
We end this part of the proof by showing that the last addend on the right-hand side of the above expression is upper bounded by $\frac{1}{4n^{\expo}}$. To do so we need to prove
\begin{equation}\label{eq:missingbound}
\left(1+\frac{1}{n^{\expo}}\right)^{1-n} \leq \frac{1}{4n^{\expo}}\frac{n^{\expo}}{n^{2(\expo)}-1},\end{equation}
which is equivalent to
\[
\left(\left(1+\frac{1}{n^{\expo}}\right)^{n^{\expo}}\right)^{n^{\delta}-\frac{1}{n^{\expo}}} \geq 4(n^{2(\expo)}-1).
\]
We now lower bound the left-hand side of the latter inequality:
\begin{align*}%\label{eq:base:exponential}
\left(\left(1+\frac{1}{n^{\expo}}\right)^{n^{\expo}}\right)^{n^{\delta}-\frac{1}{n^{\expo}}} > \frac{2^{n^\delta}}{2^{\frac{1}{n^{\expo}}}} > \frac{2^{n^\delta}}{2},
\end{align*}
where the first inequality follows from the fact that the function $(1+1/x)^x$ is greater than $2$ for $x > 2$ and the second one follows from the fact that $2^{\frac{1}{n^{\expo}}} < 2$ for $n^{\expo} > 1$. Then, since for $n\geq \sqrt[2(\expo)]{4}$, $5n^{2(\expo)} \geq  4(n^{2(\expo)}-1)$, to prove (\ref{eq:missingbound}) is enough to show
\[
2^{n^{\delta}} \geq 2 (5n^{2(\expo)}) \Longleftrightarrow  n^{\delta} \geq 2(\expo) \log_2(10n).
\]
To prove the latter, since $\delta >0$, it is enough to observe that the function $n^\delta$ is certainly bigger than the function $2\log_2(10n)>2(\expo)\log_2(10n)$ for $n$ large enough (e.g., for $\delta=1/2$, this is true for $n>639$).

%To prove the latter we study the function $\sqrt{n}-2 \log_2(10n)$. This function equals $0$ for $n\approx 0.112316$ and $n\approx 639.311$. %\footnote{To calculate the zeroes of the function I have used wolframalpha, which hopefully is trusted.}
%We next show that for $n>639$ the function is increasing in $n$ thus proving (\ref{eq:missingbound}). This is easy to see as the derivative of the function is $\frac{\sqrt{n}\ln(2)-4}{n\ln(4)}$. The derivative is greater than $0$ for $n<0$ and $n > \frac{16}{\ln^2(2)}\approx 33.3$ which implies that the function is increasing for $n>639$.
%http://www.wolframalpha.com/input/?i=plot+(2^{x-1}+x^{x-1})/((2x%2B\sqrt{x})^{x-1})-1/4x^2+from+0+to+2000

The following claim concludes the proof of the base of the induction.
\begin{claim}
The upper bound holds at time step $\ts$ for $i=4n,2n+1$.
\end{claim}
\begin{proof}
We first show the claim for $i=4n$. At time step $\ts$ FP prescribes to play $4n$. This in particular means that the strategy $4n$ achieves a payoff which is at least as much as that of action $2n+1$ after $\ts-1$ time steps. We write down the inequality given by this fact focusing only on the last $2n$ strategies (we will consider the first strategies below) and obtain:
\begin{align}\label{eq:base:payoff1}
& p_{\ts-1}(4n)+\alpha p_{\ts-1}(4n-1)+\beta p_{\ts-1}(3n) \geq \nonumber \\ & \alpha p_{\ts-1}(4n)+p_{\ts-1}(2n+1)+\beta p_{\ts-1}(4n-1)
\end{align}
and then since $\alpha>1$
\begin{align*}
\frac{p_{\ts-1}(4n)}{p_{\ts-1}(4n-1)} \leq& \ra + \frac{\beta}{\alpha-1} \frac{p_{\ts-1}(3n)}{p_{\ts-1}(4n-1)} \\ &- \frac{1}{\alpha-1} \frac{p_{\ts-1}(2n+1)}{p_{\ts-1}(4n-1)}.
\end{align*}
Similarly to the proof of Lemma \ref{lemma:p2} above this can be shown to imply
\begin{align}
\frac{p_{\ts}(4n)}{p_{\ts}(4n-1)} \leq& \frac{1}{\ts}+\ra + \frac{\beta}{\alpha-1} \frac{p_{\ts}(3n)}{p_{\ts}(4n-1)} \nonumber \\ & - \frac{1}{\alpha-1} \frac{p_{\ts}(2n+1)}{p_{\ts}(4n-1)} \nonumber \\ \leq& \frac{1}{\ts}+\ra + \frac{\beta}{\alpha-1} \frac{p_{\ts}(3n)}{p_{\ts}(4n-1)}.\label{eq:newclaim:ub}
\end{align}
We now upper bound the ratio $\frac{\beta}{\alpha-1}\frac{p_{\ts}(3n)}{p_{\ts}(4n-1)}$. By repeatedly using Lemmata \ref{lemma:p2} and \ref{lemma:p3} we have that
\begin{align*}
p_{\ts}(4n-1) &\geq \ra p_{\ts}(4n-2) \geq \left(\ra\right)^2 p_{\ts}(4n-3) \\ &\geq \ldots \geq \left(\ra\right)^{n-1} p_{\ts}(3n).
\end{align*}
This yields
\[
\frac{\beta}{\alpha-1}\frac{p_{\ts}(3n)}{p_{\ts}(4n-1)}\leq \frac{\beta}{\alpha-1} \left(\rainv\right)^{n-1} \leq \frac{1}{4 n^{\expo}},
\]
where the last inequality is proved above (see (\ref{eq:missingbound})). Therefore, since $t \geq n^{\expo}$, (\ref{eq:newclaim:ub}) implies
\[
\frac{p_{\ts}(4n)}{p_{\ts}(4n-1)} \leq 1+\frac{2}{n^{\expo}}+\frac{1}{4n^{\expo}}.
\]
To conclude this part of the proof we must now consider the contribution to (\ref{eq:base:payoff1}) of the actions $1,\ldots, 2n$ that are not in the last block. However, Lemma \ref{lemma:forgetthepast} shows that all those actions are played with probability $1/2^{n^{\delta}}$ at time $\ts$. Thus the overall contribution of these strategies is upper bounded by $\frac{1}{2^{n^{\delta}}} (\alpha-\beta) \leq  \frac{1}{2^{n^{\delta}}}$. Similarly to the above, we observe that, for $n$ sufficiently large, $n^\delta\geq\log_2(4n)\geq (1-\delta) \log_2(4n)$ which implies that $\frac{1}{2^{n^{\delta}}} \leq \frac{1}{4n^{\expo}}$. This concludes the proof of the upper bound at time $\ts$ for $i=4n$.

Consider now the case $i=2n+1$. At time step $\ts+1$, $4n$ is not played by FP, which means that $4n$ is not a best response after $\ts$ time steps. By Lemma \ref{le:bestresponses}, the best response is $2n+1$; then, in particular, the payoff of $2n+1$ is not smaller than the payoff of $4n$ at that time. We write down the inequality given by this fact focusing only on the last $2n$ strategies (we will consider the first strategies below) and obtain
\begin{align*}
&p_{\ts}(4n)+\alpha p_{\ts}(4n-1)+\beta p_{\ts}(3n) \leq \\ &\alpha p_{\ts}(4n)+p_{\ts}(2n+1)+\beta p_{\ts}(4n-1)
\end{align*}
and then since $\alpha>1$
\begin{align}
\frac{p_{\ts}(4n)}{p_{\ts}(4n-1)} \geq & \ra + \frac{\beta}{\alpha-1} \frac{p_{\ts}(3n)}{p_{\ts}(4n-1)} \nonumber \\ &- \frac{1}{\alpha-1} \frac{p_{\ts}(2n+1)}{p_{\ts}(4n-1)}.\label{eq:newclaim:lowerbound}
\end{align}
We next show that $\frac{\beta p_{\ts}(3n) - p_{\ts}(2n+1)}{(\alpha-1)p_{\ts}(4n-1)}\geq$ $- \frac{1}{4n^{\expo}}$ or equivalently that $\frac{p_{\ts}(3n)}{p_{\ts}(2n+1)} \geq \frac{1}{\beta}-\frac{(\alpha-1)p_{\ts}(4n-1)}{4\beta n^{\expo} p_{\ts}(2n+1)}$. To prove this it is enough to show that $\frac{p_{\ts}(3n)}{p_{\ts}(2n+1)} \geq \frac{1}{\beta}$. We observe that
\begin{align*}
\frac{p_{\ts}(3n)}{p_{\ts}(2n+1)} &= \frac{p_{\ts}(3n)}{p_{\ts}(3n-1)}\frac{p_{\ts}(3n-1)}{p_{\ts}(3n-2)}\cdots \frac{p_{\ts}(2n+2)}{p_{\ts}(2n+1)}\\
& \geq \left(\ra\right)^{n-1} \geq \frac{1}{\beta} \nonumber,
\end{align*}
where the first inequality follows from Lemma \ref{lemma:p2} and the second inequality follows from the observation (similar to the above) that for $n$ sufficiently large $n^{\delta}\geq 2 \log_2(2n)$. Then to summarize, for $\alpha$ and $\beta$ as in the hypothesis, (\ref{eq:newclaim:lowerbound}) implies that
\[
\frac{p_{\ts}(4n)}{p_{\ts}(4n-1)} \geq 1 + \frac{1}{n^{\expo}}-\frac{1}{4 n^{\expo}}.
\]
As above we consider actions $1,\ldots, 2n$ and observe that their contribution to the payoffs is upper bounded by $\frac{1}{4n^{\expo}}$. Now to conclude the proof of the claim for the case $i=2n+1$ we simply notice that the above implies $p_{\ts}(4n-1)<p_{\ts}(4n)$ and Lemmata \ref{lemma:p2} and \ref{lemma:p3} imply that $p_{\ts}(2n+1)<p_{\ts}(4n-1)$ which together prove the claim.
\end{proof}

\medskip
\noindent{\bf Inductive step.} Now we assume the claim is true until time step $t-1$ and we show it for time step $t$. By inductive hypothesis, the following is true, with $j \neq s_{t-1}, s_{t-1}+1$
%\begin{equation}\label{eq:indstep:indhyp:1}
%1+\frac{1}{n^{\expo}} \leq \frac{p_{t-1}(j)}{p_{t-1}(j-1)} \leq 1+\frac{3}{n^{\expo}}
%\end{equation}
%and
\begin{align}
1+\frac{1}{n^{\expo}} \leq \frac{p_{t-1}(j)}{p_{t-1}(j-1)} & \leq 1+\frac{3}{n^{\expo}}, \label{eq:indstep:indhyp:1}\\
\frac{p_{t-1}(s_{t-1})}{p_{t-1}(s_{t-1}-1)} & \leq 1+\frac{3}{n^{\expo}}, \nonumber \\%% true but unneeded -- cut to save space?
\frac{p_{t-1}(s_{t-1}+1)}{p_{t-1}(s_{t-1})} & \leq 1+\frac{3}{n^{\expo}}.\label{eq:indstep:indhyp:2}
\end{align}
We first consider the case in which $s_{t} \neq s_{t-1}$. By Lemma \ref{le:bestresponses}, the strategy played at time $t$ is $s_{t-1}+1$, i.e., $s_t=s_{t-1}+1$. Let $s_{t-1}=i$ and then we have $s_t=i+1$. By inductive hypothesis, for all the actions $j \neq i, i+1, i+2$ we have
\begin{equation}\label{eq:indhyp}
\ra=1+\frac{1}{n^{\expo}} \leq \frac{p_{t}(j)}{p_{t}(j-1)} \leq 1+\frac{3}{n^{\expo}}.
\end{equation}
Indeed, for these actions $j$, $\ell_{t-1}(j)=\ell_t(j)$ and $\ell_{t-1}(j-1)=\ell_t(j-1)$. Therefore the probabilities of $j$ and $j-1$ at time $t$ are simply those at time $t-1$ rescaled by the same amount and the claim follows from (\ref{eq:indstep:indhyp:1}). The upper bound on the ratio $\frac{p_{t}(i+2)}{p_{t}(i+1)}$ easily follows from the upper bound in (\ref{eq:indstep:indhyp:1}) as $\ell_{t-1}(i+2)=\ell_t(i+2)$ and $\ell_{t-1}(i+1)< \ell_{t}(i+1)=\ell_{t-1}(i+1)+1$. However, as $s_t=i+1$ here we need to prove lower and upper bound also for the ratio $\frac{p_{t}(i)}{p_{t}(i-1)}$ and the upper bound for the ratio $\frac{p_{t}(i+1)}{p_{t}(i)}$.

\begin{claim}\label{cla:indstep}
$1+\frac{1}{n^{\expo}} \leq \frac{p_{t}(i)}{p_{t}(i-1)} \leq 1+\frac{3}{n^{\expo}}.$
\end{claim}
\begin{proof}
To prove the claim we first focus on the last block of the game, i.e., the block in which players have strategies in $\{2n+1,\ldots,4n\}$. Recall that our notation uses circular arithmetic on the number of actions of the block.

The fact that action $i+1$ is better than action $i$ after $t-1$ time steps implies that
\begin{align*}
&p_{t-1}(i)+\alpha p_{t-1}(i-1)+\beta p_{t-1}(i-n) \leq \\ &\alpha p_{t-1}(i)+p_{t-1}(i+1)+\beta p_{t-1}(i-1)
\end{align*}
and then since $\alpha>1$
\begin{align}
\frac{p_{t-1}(i)}{p_{t-1}(i-1)} \geq & \ra + \frac{\beta}{\alpha-1} \frac{p_{t-1}(i-n)}{p_{t-1}(i-1)} \nonumber \\ &- \frac{1}{\alpha-1} \frac{p_{t-1}(i-2n+1)}{p_{t-1}(i-1)}.\label{eq:indstep:lowerbound}
\end{align}
We next show that $\frac{\beta p_{t-1}(i-n) - p_{t-1}(i-2n+1)}{(\alpha-1)p_{t-1}(i-1)}\geq - \frac{1}{4n^{\expo}}$ or equivalently that $\frac{p_{t-1}(i-n)}{p_{t-1}(i-2n+1)} \geq \frac{1}{\beta}-\frac{(\alpha-1)p_{t-1}(i-1)}{4\beta n^{\expo} p_{t-1}(i-2n+1)}$. To prove this it is enough to show that $\frac{p_{t-1}(i-n)}{p_{t-1}(i-2n+1)} \geq \frac{1}{\beta}$. We observe that
\begin{align*}
\frac{p_{t-1}(i-n)}{p_{t-1}(i-2n+1)} &= \frac{p_{t-1}(i-n)}{p_{t-1}(i-n-1)}\cdots\frac{p_{t-1}(i-2n+2)}{p_{t-1}(i-2n+1)}\\
& \geq \left(\ra\right)^{n-1} \geq \frac{1}{\beta} \nonumber,
\end{align*}
where the first inequality follows from inductive hypothesis %(\ref{eq:indhyp})
(we can use the inductive hypothesis as all the actions involved above are different from $i$ and $i+1$) and the second inequality follows from the aforementioned observation that, for sufficiently large $n$, $n^{\delta}\geq 2 \log_2(2n)$. Then to summarize, for $\alpha$ and $\beta$ as in the hypothesis, (\ref{eq:indstep:lowerbound}) implies that
\[
\frac{p_{t}(i)}{p_{t}(i-1)} = \frac{p_{t-1}(i)}{p_{t-1}(i-1)} \geq 1 + \frac{1}{n^{\expo}}-\frac{1}{4 n^{\expo}},
\]
where the first equality follows from $\ell_{t-1}(i)=\ell_{t}(i)$ and $\ell_{t-1}(i-1)=\ell_{t}(i-1)$, which are true because $s_t=i+1$.

Since action $i+1$ is worse than strategy $i$ at time step $t-1$ we have that
\begin{align*}
&p_{t-1}(i)+\alpha p_{t-1}(i-1)+\beta p_{t-1}(i-n) \geq \\ &\alpha p_{t-1}(i)+p_{t-1}(i+1)+\beta p_{t-1}(i-1)
\end{align*}
and then since $\alpha>1$
\begin{align*}
\frac{p_{t-1}(i)}{p_{t-1}(i-1)} \leq & \ra + \frac{\beta}{\alpha-1} \frac{p_{t-1}(i-n)}{p_{t-1}(i-1)} \\ &- \frac{1}{\alpha-1} \frac{p_{t-1}(i-2n+1)}{p_{t-1}(i-1)}.
\end{align*}
Similarly to the proof of Lemma \ref{lemma:p2} above this can be shown to imply
\begin{align}
\frac{p_{t}(i)}{p_{t}(i-1)} \leq & \frac{1}{t}+\ra + \frac{\beta}{\alpha-1} \frac{p_{t}(i-n)}{p_{t}(i-1)} \nonumber \\ &- \frac{1}{\alpha-1} \frac{p_{t}(i-2n+1)}{p_{t}(i-1)} \nonumber \\ \leq & \frac{1}{t}+\ra + \frac{\beta}{\alpha-1} \frac{p_{t}(i-n)}{p_{t}(i-1)}.\label{eq:indstep:ub}
\end{align}
We now upper bound the ratio $\frac{\beta}{\alpha-1}\frac{p_{t}(i-n)}{p_{t}(i-1)}$. By repeatedly using the inductive hypothesis (\ref{eq:indhyp}) we have that
\begin{align*}
p_t(i-1) \geq & \ra p_t(i-2) \geq \left(\ra\right)^2 p_t(i-3) \\ \geq & \left(\ra\right)^{n-1} p_t(i-n).
\end{align*}
(Note again that we can use the inductive hypothesis as none of the actions above is $i$ or $i+1$.) This yields
\[
\frac{\beta}{\alpha-1}\frac{p_{t}(i-n)}{p_{t}(i-1)}\leq \frac{\beta}{\alpha-1} \left(\rainv\right)^{n-1} \leq \frac{1}{4 n^{\expo}},
\]
where the last inequality is proved above (see (\ref{eq:missingbound})). Therefore, since $t \geq n^{\expo}$, (\ref{eq:indstep:ub}) implies the following
\[
\frac{p_{t}(i)}{p_{t}(i-1)} \leq 1+\frac{2}{n^{\expo}}+\frac{1}{4 n^{\expo}}.
\]

To conclude the proof we must now consider the contribution to the payoffs of the actions $1,\ldots, 2n$ that are not in the last block. However, Lemma \ref{lemma:forgetthepast} shows that all those actions are played with probability $1/2^{n^{\delta}}$ at time $\ts$. Since we prove above (see Lemma \ref{le:bestresponses}) that these actions are not played anymore after time step $\ts$ this implies that $\sum_{j=1}^{2n} p_{t}(j) \leq \sum_{j=1}^{2n} p_{\ts}(j) \leq 2^{-n^{\delta}}$. Thus the overall contribution of these strategies is upper bounded by $\frac{1}{2^{n^{\delta}}} (\alpha-\beta) \leq  \frac{1}{2^{n^{\delta}}} \leq \frac{1}{4n^{\expo}}$ where the last bound follows from the aforementioned fact that, for $n$ sufficiently large, $n^\delta\geq(1-\delta)\log_2(4n)$. This concludes the proof of this claim.
\end{proof}

\begin{claim}
$\frac{p_{t}(i+1)}{p_{t}(i)} \leq 1+\frac{3}{n^{\expo}}.$
\end{claim}
\begin{proof}
From (\ref{eq:indstep:lowerbound}) (and subsequent arguments) we get $p_t(i)>p_t(i-1)$ and from (\ref{eq:indhyp}) we get $p_t(i-1) > p_t(i-2) > \ldots > p_t(i-2n+1)=p_t(i+1)$. Therefore, $p_t(i)>p_t(i+1)$ thus proving the upper bound.
\end{proof}

Finally, we consider the case in which $s_{t-1}=s_t$. In this case, for the actions $j \neq s_t,s_t+1$ it holds $\ell_{t-1}(j)=\ell_t(j)$ and $\ell_{t-1}(j-1)=\ell_t(j-1)$. Therefore, similarly to the above, for these actions $j$ the claim follows from (\ref{eq:indstep:indhyp:1}). The upper bound for the ratio $\frac{p_{t}(s_{t}+1)}{p_{t}(s_{t})}$ easily follows from
%the second inequality in
(\ref{eq:indstep:indhyp:2}) as $\ell_{t-1}(s_t+1)=\ell_t(s_t+1)$ and $\ell_{t-1}(s_t)< \ell_{t}(s_t)=\ell_{t-1}(s_t)+1$. The remaining case to analyze is the upper bound on the ratio $\frac{p_{t}(s_{t})}{p_{t}(s_{t}-1)}$. To prove this we can use \emph{mutatis mutandis} the proof of the upper bound contained in Claim \ref{cla:indstep} with $s_t=i$.
\end{proof}

The claimed performance of Fictitious Play, in terms of the approximation to the best
response that it computes, follows directly from this theorem.

\begin{theorem}\label{thm:final}
For any value of $\delta>0$ and any time step $t$, Fictitious Play returns an $\epsilon$-NE with $\epsilon \geq \frac{1}{2}-O\left(\frac{1}{n^{\expo}}\right)$.
\end{theorem}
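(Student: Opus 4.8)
The plan is to read the bound straight off Theorem~\ref{thm:tailing}, which pins down the near-geometric shape of $p_t$ on the last block $S=\{2n+1,\ldots,4n\}$. Fix a time step $t$. By Observation~\ref{obs:sym} both players carry the same distribution $p_t$, so the approximation quality is exactly the gap between the best-response payoff and the payoff of $p_t$ against itself,
\[
\epsilon=\max_{m}u_1(m,p_t)-u_1(p_t,p_t).
\]
I would treat the cycling regime $t\geq\ts$ as the main case, where Theorem~\ref{thm:tailing} applies verbatim, and dispose of $t<\ts$ separately: there the support is the initial ascending run $\{1,\ldots,s_t\}$, on which Lemmata~\ref{lemma:p1}--\ref{lemma:p3} give the same near-geometric profile, so the estimates below go through unchanged (and for very small $t$ the gap is only larger).

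From the two-sided ratios $1+\frac{1}{n^{\expo}}\leq p_t(i)/p_t(i-1)\leq 1+\frac{3}{n^{\expo}}$ (the lower bound holding for all $i\neq s_t,s_t+1$) I extract two facts. First, going backward from $s_t$ the probabilities decay by a factor at least $1+\frac{1}{n^{\expo}}$ per step, so $1=\sum_i p_t(i)\geq p_t(s_t-1)\sum_{m=0}^{2n-2}\left(1+\frac{1}{n^{\expo}}\right)^{-m}=\Omega(n^{\expo})\,p_t(s_t-1)$; hence $\max_i p_t(i)=O(1/n^{\expo})$ and $\sum_i p_t(i)^2=O(1/n^{\expo})$. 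Second, the mass lying $n$ or more steps behind $s_t$ is exponentially small: by the same $(1+1/x)^x>2$ estimate used in Lemma~\ref{lemma:forgetthepast}, $\left(1+\frac{1}{n^{\expo}}\right)^{-(n-1)}\leq 2^{-\Omega(n^{\delta})}$, so that half of the block carries probability $2^{-\Omega(n^{\delta})}$; together with Lemma~\ref{lemma:forgetthepast} for the pre-block actions $\{1,\ldots,2n\}$, all but a $2^{-\Omega(n^{\delta})}$ fraction of the mass sits on the $n$ most recently played actions.

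The crux is the upper bound $u_1(p_t,p_t)\leq\frac12+O(1/n^{\expo})$. Expanding $u_1(p_t,p_t)=\sum_{i,j}p_t(i)R_{ij}p_t(j)$, the diagonal contributes $\sum_i R_{ii}p_t(i)^2=O(1/n^{\expo})$ and the pre-block terms $2^{-\Omega(n^{\delta})}$, both negligible. For the within-block off-diagonal mass I group ordered pairs into unordered pairs $\{i,j\}$ and use the structural feature of ${\cal G}_n$ under the circular arithmetic on $S$: at cyclic distance $d=1$ exactly one of $R_{ij},R_{ji}$ is nonzero and equals $\alpha$; for $2\leq d\leq n-1$ exactly one is nonzero and equals $\beta$; and at $d=n$ both equal $\beta$. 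Writing $P_d$ for the $p_t(i)p_t(j)$-mass of the pairs at distance $d$ (so $\sum_{d=1}^n P_d\leq\frac12$), the off-diagonal sum is
\[
\sum_{\{i,j\}\subseteq S}p_t(i)p_t(j)\big(R_{ij}+R_{ji}\big)=\beta\sum_{d=1}^{n}P_d+(\alpha-\beta)P_1+\beta P_n\leq\frac{\beta}{2}+(\alpha-\beta)P_1+\beta P_n.
\]
Since $\alpha-\beta=O(1/n^{\expo})$ and $P_1,P_n\leq\max_i p_t(i)=O(1/n^{\expo})$, the corrections are $O(1/n^{\expo})$, giving $u_1(p_t,p_t)\leq\frac12+O(1/n^{\expo})$. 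This pairing is exactly where ${\cal G}_n$ is engineered so that only one direction of each pair pays off, which is what produces the factor $\frac12$. Verifying the distance classification and that the two boundary families ($d=1$ paying $\alpha$, $d=n$ paying $2\beta$) perturb the bound by only $O(1/n^{\expo})$ is the main obstacle; the rest is bookkeeping on geometric sums already licensed above.

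For the best response I exhibit one good deviation: strategy $s_t$ earns $\beta$ against each of $s_t-2,\ldots,s_t-n$, so dropping the remaining nonnegative terms, $u_1(s_t,p_t)\geq\beta\sum_{k=2}^{n}p_t(s_t-k)$. By the second fact above the excluded actions $s_t,s_t-1$ and the old half carry only $O(1/n^{\expo})+2^{-\Omega(n^{\delta})}$ mass, so $\sum_{k=2}^{n}p_t(s_t-k)\geq 1-O(1/n^{\expo})$; with $\beta=1-O(1/n^{2(\expo)})$ this yields $\max_m u_1(m,p_t)\geq 1-O(1/n^{\expo})$. Subtracting the two bounds gives $\epsilon\geq \big(1-O(1/n^{\expo})\big)-\big(\frac12+O(1/n^{\expo})\big)=\frac12-O(1/n^{\expo})$. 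Finally, to pass from ${\cal G}_n$ (payoffs in $\{0,\beta,1,\alpha\}$) to the $[0,1]$-rescaled game I divide all payoffs by $\alpha$: Fictitious Play is invariant under this positive scaling and so produces the same sequence, while the additive gap is multiplied by $1/\alpha=1-O(1/n^{\expo})$, so the bound $\frac12-O(1/n^{\expo})$ survives.
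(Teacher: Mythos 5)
Your proposal follows essentially the same route as the paper's proof: Theorem~\ref{thm:tailing} is used to show (i) that all but an exponentially small fraction of the probability mass sits on the $n$ most recently played actions, so that $s_t$ earns at least $\beta$ against almost all of the opponent's distribution, and (ii) that no single action carries more than $O(1/n^{\expo})$ probability, which combined with the fact that for two distinct actions in the recent window at most one of $R_{ij},R_{ji}$ is nonzero (up to the distance\hyph $1$ and distance\hyph $n$ exceptions you isolate) forces the self-play payoff down to $\frac12+O(1/n^{\expo})$. The paper phrases step (ii) as ``the total payoff of two players choosing distinct strategies is at most $\alpha$, so by symmetry each gets about half,'' whereas you sum $R_{ij}+R_{ji}$ over unordered pairs classified by cyclic distance; these are the same computation, and your version is in fact slightly more careful about the pairs at distance exactly $n$, where both entries equal $\beta$.

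One step is written backwards, however. To obtain $\max_i p_t(i)=O(1/n^{\expo})$ you invoke the \emph{lower} ratio bound $p_t(i)/p_t(i-1)\ge 1+\frac{1}{n^{\expo}}$ and assert $1=\sum_i p_t(i)\ge p_t(s_t-1)\sum_{m}\bigl(1+\frac{1}{n^{\expo}}\bigr)^{-m}$. But decay by a factor \emph{at least} $1+\frac{1}{n^{\expo}}$ per backward step gives $p_t(s_t-1-m)\le p_t(s_t-1)\bigl(1+\frac{1}{n^{\expo}}\bigr)^{-m}$, so that geometric sum bounds $\sum_i p_t(i)$ from \emph{above}; as written the inequality would show the maximum probability is $\Omega(1/n^{\expo})$, not $O(1/n^{\expo})$. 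The correct ingredient is the \emph{upper} ratio bound of Theorem~\ref{thm:tailing}: from $p_t(i)/p_t(i-1)\le 1+\frac{3}{n^{\expo}}$ for all $i\in S$ one gets $p_t(s^{\max}_t-m)\ge p_t(s^{\max}_t)\,r^{m}$ with $r=\bigl(1+\frac{3}{n^{\expo}}\bigr)^{-1}$, hence $1\ge\sum_{m=0}^{n-1}p_t(s^{\max}_t-m)\ge p_t(s^{\max}_t)\cdot\frac{1-r^{n}}{1-r}\ge p_t(s^{\max}_t)\cdot\frac{n^{\expo}}{6}$, which is exactly the contradiction argument in the paper. The fix is immediate and everything downstream ($\sum_i p_t(i)^2$, $P_1$, $P_n$) survives, but the step as stated is false. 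Your treatment of the regime $t<\ts$ is no more informal than the paper's own, so I would not count that as an additional gap.
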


\begin{proof}
For $t\leq n$ the result follows since the game is similar to~\cite{Con}. In details, for $t \leq n$ the payoff associated to the best response, which in this case is $s_t+1$, is upper bounded by $1$. On the other hand, the payoff associated to the current strategy prescribed by FP is lower bounded by $\frac{\beta}{i^2} \sum_{j=0}^{i-1} j$ where $i=s_t$. Therefore, the regret of either player normalized to the $[0,1]$ interval satisfies: $\epsilon \geq \frac{1}{\alpha}-\frac{\beta}{\alpha}\frac{i-1}{2i}$.
Since $\frac{i-1}{2i}<1/2$, the fact that $1-\frac{\beta}{2}-\frac{\alpha}{2}+\frac{\alpha}{n^{\expo}}\geq 0$ (which is true given the values of $\alpha$ and $\beta$) yields the claim.
For $t\leq \ts$ the result follows from Lemmata~\ref{lemma:p2} and \ref{lemma:p3};
while the current strategy $s_t$ (for $t\leq t^*$) has payoff approximately 1,
the players' mixed strategies have nearly all their probability on the recently played strategies, but with no pure
strategy having very high probability, so that some player is likely to receive zero payoff;
by symmetry each player has payoff approximately $\frac{1}{2}$. This is made precise below,
where it is applied in more detail to the case of $t>\ts$.

We now focus on the case $t > \ts$. Recall %that $s_t$ denotes the strategy played at iteration $t$ and
that for a set of strategies $S$, $p_t(S)=\sum_{i\in S} p_t(i)$. Let $S_t$ be the set $\{2n+1,\ldots,s_t\}\cup\{s_t+n,\ldots,4n\}$ if $s_t\leq 3n$,
or the set $\{s_t-n,\ldots,s_t\}$ in the case that $s_t>3n$.
Let $S'_t=\{2n+1,\ldots,4n\}\setminus S_t$. Also, let $s^{\max}_t=\arg\max_{i\in\{2n+1,\ldots,4n\}}(p_t(i))$; note that by Theorem~\ref{thm:tailing},
$s^{\max}_t$ is equal to either $s_t$ or $s^-_t$, where $s^-_t=s_t-1$ if $s_t>2n$, or $4n$ if $s_t=2n$.

We start by establishing the following claim:

\begin{claim}\label{cla:mainthm:1}
For sufficiently large $n$, $p_t(S_t)\geq 1-\frac{2n-1}{2^{n^{\delta}}}$.
\end{claim}
\begin{proof}
To see this, note that for all $x\in S'_t$, by $p_t(s^{\max}_t) \geq p_t(s^{\max}_t-1)$ and Theorem~\ref{thm:tailing} we have
\begin{align*}
  \frac{p_t(s^{\max}_t)}{p_t(x)}  & = \frac{p_t(s^{\max}_t)}{p_t(s^{\max}_t-1)}\frac{p_t(s^{\max}_t-1)}{p_t(s^{\max}_t-2)}\ldots\frac{p_t(x+1)}{p_t(x)} \nonumber \\ & \geq \left(1+\frac{1}{n^{\expo}}\right)^{k-1},
\end{align*}
where $k$ is the number of factors on the right-hand side of the equality above, i.e., the number of strategies between $x$ and $s^{\max}_t$. Thus, as $k \geq n$,
\begin{align*}
p_t(x)
& \leq \frac{p_t(s^{\max}_t)}{\left(1+\frac{1}{n^{\expo}}\right)^{k-1}}
\leq \left(1+\frac{1}{n^{\expo}}\right)^{1-k} \\ &\leq \left(1+\frac{1}{n^{\expo}}\right)^{1-n}
\leq 4^{(1-n)/(n^{\expo})}.
\end{align*}
Hence $p_t(S'_t)\leq (2n)4^{(1-n)/(n^{\expo})}=\frac{n 4^{1/n^{\expo}}}{2^{n^{\delta}}}<\frac{2n}{2^{n^\delta}}$, where the last inequality follows from the fact that, for large $n$, $4^{1/n^{\expo}}<2$. Then $p_t(S_t)\geq 1-p_t(S'_t)-p_t(\{1,\ldots,2n\})$,
which establishes the claim, since Lemma~\ref{lemma:forgetthepast} establishes a strong
enough upper bound on $p_t(\{1,\ldots,2n\})$.
\end{proof}
\begin{claim}
$s_t$, the current best response at time $t$, has payoff at least $\beta \left(1-\frac{2n-1}{2^{n^{\delta}}}\right)$.
\end{claim}
\begin{proof}
$s_t$ receive a payoff of at least $\beta$ when the opponent plays any strategy from $S_t$; the claim
follows using Claim \ref{cla:mainthm:1}.
\end{proof}
Let $E_t$ denote the expected payoff to either player that would result if they both select
a strategy from the mixed distribution that allocates to each strategy $x$, the probability $p_t(x)$.
The result will follow from the following claim:
\begin{claim}
For sufficiently large $n$, $E_t\leq \frac{\alpha}{2}+\frac{6}{n^{\expo}}+\alpha \frac{2n}{2n^\delta}$.
\end{claim}
\begin{proof}
The contribution to $E_t$ from strategies in $\{1,\ldots,n\}$, together with strategies in $S'_t$,
may be upper-bounded by $\alpha$ times the probability that any of that strategies get
played. This probability is by Lemma \ref{lemma:forgetthepast} and Claim \ref{cla:mainthm:1} exponentially small, namely $2n/2^{n^\delta}$.

Suppose instead that both players play from $S_t$. If they play different strategies, their
total payoff will be at most $\alpha$, since one player receives payoff 0. If they play the
same strategy, they both receive payoff 1. We continue by upper-bounding the probability that
they both play the same strategy. This is upper-bounded by the largest probability assigned
to any single strategy, namely $p_t(s^{\max}_t)$.

Suppose for contradiction that $p_t(s^{\max}_t) > 6/n^{\expo}$. At this point, note that by Theorem \ref{thm:tailing}, for any strategy $s \in S_t$, we have
\[
\frac{p_t(s^{\max}_t)}{p_t(s)} \leq \left(1+\frac{3}{n^{\expo}}\right)^k,
\]
where $k$ is the distance between $s$ and $s^{\max}_t$. Therefore, denoting $r=\left(1+\frac{3}{n^{\expo}}\right)^{-1}$, we obtain
\begin{align*}
p_t(S_t) &=\sum_{s \in S_t} p_t(s) = p_t(s_t) + \sum_{i=s_t-n}^{s_t-1} p_t(i) \\&\geq p_t(s^{\max}_t) \sum_{k=0}^{n-1} r^k.
\end{align*}
Applying the standard formula for the partial sum of a geometric series we have
\[
p_t(S_t) \geq \frac{6}{n^{\expo}}\left(\frac{1-r^{n}}{1-r} \right)
\]
Noting that $1-r^{n} >\frac{1}{2}$ we have $p_t(S_t) > \frac{6}{n^{\expo}}\cdot(\frac{1}{2})\cdot(\frac{n^{\expo}}{3})$
which is greater than 1, a contradiction.

The expected payoff $E_t$ to either player, is, by symmetry, half the expected total payoff,
so we have $E_t \leq (1-\frac{2n}{2^{n^\delta}}-\frac{6}{n^{\expo}}) \frac{\alpha}{2}  + \frac{6}{n^{\expo}} + \frac{2n}{2n^\delta} \alpha$ which yields the claim.
\end{proof}
We now show that Fictitious Play never achieves an $\epsilon$-value better than $\frac{1}{2}-O\left(\frac{1}{n^{\expo}}\right)$. From the last two claims the regret of either player normalized to $[0,1]$ is
\begin{align*}
\epsilon  \geq & \frac{\beta}{\alpha}\left(1-\frac{2n-1}{2^{n^{\delta}}}\right)-\frac{1}{2}-\frac{6}{\alpha n^{\expo}}-\frac{2n}{2^{n^\delta}} \\ = & \left(1- \frac{n^{\expo}+1}{n^{2(\expo)}+n^{\expo}}\right)\left(1-\frac{2n-1}{2^{n^{\delta}}}\right)\\ &-\frac{1}{2}-\frac{6}{n^{\expo}+1}-\frac{2n}{2^{n^\delta}}
\\ =& \frac{1}{2} - \frac{n^{\expo}+1}{n^{2(\expo)}+n^{\expo}} - \frac{n^{\expo}+1}{n^{2(\expo)}+n^{\expo}}\frac{2n-1}{2^{n^{\delta}}} \\ &-\frac{6}{n^{\expo}+1}-\frac{2n}{2^{n^\delta}}\\
=&\frac{1}{2} - O\left(\frac{1}{n^{\expo}}\right).
\end{align*}
This concludes the proof.
\end{proof}

\section{Upper bound}\label{sec:UB}

In this section, $n$ denotes the number of pure strategies of both players.
Let $a, b$ denote the FP sequences of pure strategies of length $t$,
for players 1 and 2 respectively.
Let $a_{[k:\ell]}$ denote the subsequence $a_k,\ldots, a_\ell$.
We overload notation and use $a$ to also denote the mixed strategy that is uniform on the corresponding sequence.

Let $m^*$ be a best response against $b$, and let $\epsilon$ denote the 
smallest $\epsilon$ for which $a$ is an $\epsilon$-best-response against~$b$. 
To derive a bound on $\epsilon$, we use the most recent occurence of pure strategy 
in $a$. 
For $k \in \{1,\ldots,t\}$, let $f(k)$ denote the last occurrence of $a_k$ in the sequence $a$, that is,
$$
f(k) := \max_{\ell \in \{1,\ldots, t \} ,\ a_\ell = a_k} \ell.
$$
We have the following.

\begin{align}
\nonumber \epsilon &= u_1(m^*, b) - u_1(a,b)\\
\nonumber &=\frac1t \sum_{i=1}^t \big( u_1(m^*,b)-u_1(a_i,b) \big) \\
\nonumber &=\frac1t \sum_{i=1}^t \big[ \frac{f(i)-1}t
(u_1(m^*,b_{[1:f(i)-1]})-u_1(a_i,b_{[1:f(i)-1]})) + \frac{t-f(i)+1}t
(u_1(m^*,b_{[f(i):t]})-u_1(a_i,b_{[f(i):t]}))\big]\\
\label{e:dropbr}
&\leq\frac1t \sum_{i=1}^t \big[ \frac{t-f(i)+1}t
(u_1(m^*,b_{[f(i):t]})-u_1(a_i,b_{[f(i):t]})) \big]\\
&
\label{e:sub1}
\leq\frac1t \sum_{i=1}^t \frac{t-f(i)+1}t\\
&=
\label{e:msbound}
1 + \frac1t - \frac1{t^2} \sum_{i=1}^t f(i)
\end{align}
%%%%%%%%%%%%%%%%%%%%%%%%%%%%%%%%%%%%%%%
Inequality \eqref{e:dropbr} holds since $a_i$ is a best response against 
$b_{[1:f(i)-1]}$, by definition.
%, so $u_1(m^*,b_{[1:f(i)-1]})-u_1(a_i,b_{[1:f(i)-1]})\le 0$.
Inequality \eqref{e:sub1} holds since payoffs are in the range $[0,1]$.
To provide a guarantee on the performance of FP, we find the sequence~$a$ 
that maximizes the RHS of \eqref{e:msbound}, i.e., that minimizes $\sum_{i=1}^t f(i)$.
\begin{definition}
\label{e:sumf}
For a FP sequence $a$, let $S(a) := \sum_{i=1}^t f(a_i)$ and let $\bs= \arg \min_a S(a)$.
\end{definition}

The following three lemmata allow us to characterize \bs, the sequence that minimizes $S(a)$.

\begin{lemma}
\label{l:ndistinct}
%Suppose that $\bs= \arg \min_{a \in \{1,\ldots,n\}^T} S(a)$.
The entries of $\bs$ take on exactly $n$ distinct values.
\end{lemma}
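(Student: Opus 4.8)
We want to show that the minimizing sequence $\bs$ uses exactly $n$ distinct pure strategy values. Recall that $\bs$ minimizes $S(a)=\sum_{i=1}^t f(a_i)$, where $f(k)$ is the position of the last occurrence of the value $a_k$ in the sequence. The plan is to argue by contradiction in both directions: first that using fewer than $n$ distinct values is suboptimal, and then that using more than $n$ is impossible under the FP dynamics, or at least not beneficial.

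**The main idea.** The quantity $S(a)=\sum_{i=1}^t f(a_i)$ can be rewritten by grouping positions according to which value they hold. If value $v$ occupies a set of positions $P_v\subseteq\{1,\ldots,t\}$, then every $i\in P_v$ contributes $f(a_i)=\max P_v$ to the sum, so $S(a)=\sum_{v}|P_v|\cdot\max P_v$. To minimize $S(a)$ we want values that are used many times to have their \emph{last} occurrence as early as possible. Intuitively, spreading the $t$ positions across as many distinct values as possible lets each value's last occurrence sit earlier, reducing the $\max P_v$ factors; but we are constrained to $n$ available strategies. So the first step I would carry out is to show that if $\bs$ used strictly fewer than $n$ distinct values, one could split the position-set of a frequently-used value and reassign the later positions to a hitherto-unused strategy, strictly decreasing some $\max P_v$ contributions while only lowering others — hence strictly decreasing $S(\bs)$, contradicting optimality. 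This establishes the lower bound of $n$ distinct values.

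**The upper bound direction.** For the reverse inequality, that $\bs$ uses at most $n$ distinct values, I would simply note that there are only $n$ pure strategies available, so a valid FP sequence $a$ can contain at most $n$ distinct entries by definition. Thus the ``exactly $n$'' conclusion follows by combining the two directions. The subtle point is whether the minimizer must be a \emph{realizable} FP sequence or merely an arbitrary sequence over the $n$ strategies; since $\bs=\arg\min_a S(a)$ ranges over FP sequences, I would verify that the exchange argument above produces a sequence that is itself a legitimate FP sequence (or that the bound $S(\bs)$ is attained in the class under consideration).

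**Expected main obstacle.** The delicate part will be the exchange/reassignment argument: when I move the ``tail'' positions of a repeated value onto a fresh strategy, I must confirm this strictly reduces $S$ and does not merely keep it equal, and I must track how reassigning positions changes $\max P_v$ for the affected values. Specifically, splitting a value used at positions $\{i_1<\cdots<i_m\}$ and giving the later half to a new strategy lowers the contribution of the \emph{earlier} block from $m\cdot i_m$ to (smaller block size)$\times$(smaller max), and the new strategy contributes its own $\max$, so I need an inequality showing the total strictly decreases whenever a value is used at least twice and an unused strategy exists. Handling ties in $\max P_v$ and ensuring the constructed sequence stays within the admissible class is where the care is required; the counting identity $S(a)=\sum_v|P_v|\max P_v$ is the lever that makes this tractable.
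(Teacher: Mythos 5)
Your proposal is correct and takes essentially the same route as the paper: both argue that at most $n$ values can appear, and then derive a contradiction from ``fewer than $n$'' via an exchange that hands occurrences of an over-used strategy to an unused one, strictly decreasing $S$. The paper's version is just the minimal instance of your splitting argument (it moves a single occurrence of a strategy appearing more than $t/n$ times to the unused strategy), and it shares the same unaddressed looseness you flag about whether the modified sequence is a realizable FP sequence.
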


\begin{proof}
The entries of an FP sequence can take on at most $n$ distinct values.
% and therefore at most $n$ distinct  summands in $S(a)$.
Suppose for the sake of contradiction that the entries of \bs take on
strictly less than $n$ distinct values. %summands in $S(\bs)$.
Then there is a pure strategy, say $m$, that does not appear in \bs and
a pure strategy $m'$ that appears more than $t/n$ times.
Obtain $a$ from $\bs$ by replacing a single occurrence of $m'$ in $\bs$ with $m$.
Then $S(a)<S(\bs)$, a contradiction.
\end{proof}

We now define a transformation of an FP sequence $a$ into a new sequence $a'$
so that $S(a') < S(a)$ if $a \ne a'$.

\begin{definition}
\label{d:transform}
Suppose the entries of $a$ take on $d$ distinct values.
We define $x_1,\ldots,x_d$ to be the last occurrences, $\{f(a_i)\ |\ i\in[t]\}$,
in ascending order.
Formally, let $x_d:=a_t$ and for $k<d$ let $x_{k}:=a_i$ be such that
$$
i := \arg\max_{j=1,\ldots,t} a_j \notin \{x_{k+1},\ldots,x_d\}.
$$
For $i=1,\ldots,d$, let
$$
\#(x_i):=|\{a_j\ |\ j\in[t], a_j=x_i\}|,
$$
which is the number of occurrences of $x_i$ in $a$.
Define $a'$ as
%Then we define $a'$ as
$$
a' :=
\underbrace{x_1,\ldots,x_1}_{\#(x_1)}, \underbrace{x_2,\ldots,x_2}_{\#(x_2)}, \cdots,\underbrace{x_d,\ldots,x_d}_{\#(x_d)}.
$$
\end{definition}

\begin{lemma}
\label{l:transform}
For any FP sequence $a$, let $a'$ be as in Definition~\ref{d:transform}.
If $a' \ne a$ then $S(a') < S(a)$.
\end{lemma}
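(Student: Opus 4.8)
Lemma~\ref{l:transform} asserts that the rearrangement $a'$ of Definition~\ref{d:transform}—which groups all occurrences of each distinct value into a single block, with the blocks ordered by the original last-occurrence positions—strictly decreases $S(a)$ whenever $a' \ne a$. Let me sketch how I would prove this.

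The plan is to rewrite both $S(a)$ and $S(a')$ as weighted sums over the $d$ distinct values and then compare them value-by-value. For each $k\in[d]$ write $\#(x_k)$ for the multiplicity of $x_k$ (as in Definition~\ref{d:transform}) and let $L(x_k)$ denote the position of the last occurrence of $x_k$ in the original sequence $a$; by the construction of the $x_k$ these last-occurrence positions are strictly increasing, $L(x_1)<L(x_2)<\cdots<L(x_d)=t$. Every one of the $\#(x_k)$ positions holding value $x_k$ contributes exactly $L(x_k)$ to $S(a)$, so $S(a)=\sum_{k=1}^{d}\#(x_k)\,L(x_k)$. In $a'$ the values appear in consecutive blocks, so the last occurrence of $x_k$ in $a'$ sits at the cumulative position $c_k:=\sum_{j=1}^{k}\#(x_j)$, giving $S(a')=\sum_{k=1}^{d}\#(x_k)\,c_k$.

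The heart of the argument is the claim that $c_k\le L(x_k)$ for every $k$. To see this, observe that each of the values $x_1,\ldots,x_k$ has its last occurrence at a position $\le L(x_k)$, so \emph{every} occurrence of any of these values lies in the prefix of positions $\{1,\ldots,L(x_k)\}$. These occurrences number $c_k=\sum_{j\le k}\#(x_j)$ and are distinct, yet they must fit into the $L(x_k)$ available prefix positions; hence $c_k\le L(x_k)$ by pigeonhole. Since each weight $\#(x_k)$ is a positive integer, summing the inequalities $\#(x_k)c_k\le\#(x_k)L(x_k)$ over $k$ yields $S(a')\le S(a)$.

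It remains to upgrade this to a strict inequality when $a\neq a'$, and this is the only delicate point. I would argue contrapositively: if $S(a')=S(a)$ then, because all weights are positive, we must have $c_k=L(x_k)$ for every $k$. Equality for a given $k$ forces the prefix $\{1,\ldots,L(x_k)\}$ to be occupied \emph{solely} by copies of $x_1,\ldots,x_k$. Applying this with $k=1$ shows positions $1,\ldots,c_1$ all equal $x_1$; applying it with $k=2$ then forces positions $c_1+1,\ldots,c_2$ to equal $x_2$; and an easy induction on $k$ shows that $a$ already has the block form of Definition~\ref{d:transform}, i.e.\ $a=a'$. Equivalently, whenever $a\neq a'$ at least one index $k$ satisfies $c_k<L(x_k)$, and since $\#(x_k)\ge 1$ this single term makes the summed inequality strict, giving $S(a')<S(a)$. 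The main obstacle is thus not any computation but this rigidity argument: verifying that term-wise equality in the pigeonhole bound pins down the sequence completely.
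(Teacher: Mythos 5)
Your proof is correct, and it is in fact more careful than the paper's. The paper disposes of this lemma in one line, asserting the position-wise domination $f(a'_i)\le f(a_i)$ for every $i$, with strict inequality somewhere when $a'\ne a$. Read literally, that domination is false: for $a=(1,2,1)$ one gets $a'=(2,1,1)$, and at position $2$ the last occurrence of $a'_2=1$ in $a'$ is $3$ while the last occurrence of $a_2=2$ in $a$ is $2$; nevertheless $S(a')=7<8=S(a)$. What the paper's argument really needs is exactly your value-indexed decomposition $S(a)=\sum_{k}\#(x_k)\,L(x_k)$ versus $S(a')=\sum_{k}\#(x_k)\,c_k$, together with the pigeonhole bound $c_k\le L(x_k)$ (the $c_k$ occurrences of $x_1,\ldots,x_k$ all sit at distinct positions in the prefix of length $L(x_k)$). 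Your rigidity argument for strictness --- term-wise equality forces positions $1,\ldots,c_1$ to be $x_1$, then $c_1+1,\ldots,c_2$ to be $x_2$, and so on, hence $a=a'$ --- is likewise the right way to close the step the paper leaves implicit. So the underlying idea is the same (compare last-occurrence contributions of $a$ and $a'$), but your version supplies the grouping by value and the pigeonhole step that make the comparison actually go through.
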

%
% All entries of $a'$ appear in contiguous blocks of the same pure strategy and the order
% of the last occurrences of pure strategies in $\bs$ and $a'$ is the same.
\begin{proof}
For all $i=1,\ldots,t$ we have $f(a'_i)\le f(a_i)$, and since $a'\ne a$ there is at least one $i$ such that $f(a'_i)< f(a_i)$.
\end{proof}

\begin{lemma}
\label{l:equal}
Let $n,t\in \mathbb{N}$ be such that $n | t$. Let $a$ be a sequence of length $t$ of the form
$$
a
=
\underbrace{1,\ldots,1}_{c_1}, \underbrace{2,\ldots,2}_{c_2}, \cdots,\underbrace{n,\ldots,n}_{c_n}.
$$
Then $S(a)$ is minimized if and only if $c_1=\cdots=c_n=t/n$.
\end{lemma}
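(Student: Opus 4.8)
The plan is to reduce $S(a)$ to a simple symmetric function of the block lengths $c_1,\ldots,c_n$ and then to minimise that function by a standard convexity argument. Write $C_k:=\sum_{j=1}^{k}c_j$ for the cumulative block lengths, so that the last occurrence of strategy $k$ in the sequence $a$ sits exactly at position $C_k$; that is, $f(k)=C_k$. Every index $i$ lying in the $k$-th block satisfies $a_i=k$, and hence $f(a_i)=C_k$. Grouping the sum $S(a)=\sum_{i=1}^t f(a_i)$ block-by-block therefore gives
\[
S(a)=\sum_{k=1}^{n} c_k\, C_k=\sum_{k=1}^{n}\sum_{j=1}^{k} c_j c_k=\sum_{1\le j\le k\le n} c_j c_k.
\]

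The first real step is to turn this ordered-pair sum into a symmetric closed form. Splitting the square $\big(\sum_k c_k\big)^2=\sum_{j,k}c_j c_k$ into its diagonal and off-diagonal contributions yields $\sum_{1\le j\le k\le n}c_jc_k=\tfrac12\big[(\sum_k c_k)^2+\sum_k c_k^2\big]$, and substituting $\sum_k c_k=t$ gives
\[
S(a)=\tfrac12\Big(t^2+\sum_{k=1}^{n} c_k^2\Big).
\]
A useful sanity check falls out immediately: $S(a)$ depends only on the multiset $\{c_k\}$, not on the order in which the blocks appear.

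Given this identity, minimising $S(a)$ over admissible block lengths is equivalent to minimising $\sum_{k=1}^n c_k^2$ subject to the constraint $\sum_{k=1}^n c_k=t$. By the Cauchy--Schwarz inequality (equivalently, the power-mean inequality) we have $\sum_{k=1}^n c_k^2\ge \tfrac1n\big(\sum_{k=1}^n c_k\big)^2=\tfrac{t^2}{n}$, with equality if and only if all the $c_k$ are equal. Because $n\mid t$, the common value $c_k=t/n$ is a positive integer and hence a feasible choice of block lengths, so the lower bound is attained precisely at $c_1=\cdots=c_n=t/n$, where $S(a)=\tfrac{t^2(n+1)}{2n}$.

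I do not expect a genuine obstacle here: the only step requiring care is the block-grouping identity that produces the closed form $S(a)=\tfrac12(t^2+\sum_k c_k^2)$. The ``if and only if'' is then delivered for free by the strict convexity of $x\mapsto x^2$ (equivalently, the equality case of Cauchy--Schwarz), which guarantees that \emph{any} unequal choice of block lengths strictly increases $\sum_k c_k^2$, and therefore $S(a)$; this pins down the minimiser uniquely as the equal-block sequence.
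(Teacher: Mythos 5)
Your proof is correct, and it takes a genuinely different route from the paper's. You derive the global closed form $S(a)=\sum_{k=1}^n c_k C_k=\tfrac12\bigl(t^2+\sum_{k=1}^n c_k^2\bigr)$ (where $C_k$ is the cumulative block length, i.e., the last position of block $k$) and then finish with Cauchy--Schwarz, whose equality case immediately delivers both directions of the ``if and only if.'' The paper instead argues locally: it computes the contribution of two adjacent blocks to $S(a)$ as a quadratic in the position of the boundary between them, observes this is minimized when the boundary bisects the two blocks, and then rules out unequal block lengths by a contradiction argument involving swapping blocks until two adjacent blocks differ in length by more than one. Your approach buys a cleaner and shorter argument, makes the uniqueness of the minimizer transparent via strict convexity rather than via the swapping manipulation, and as a bonus yields the exact minimum value $S(\bs)=\tfrac{t^2(n+1)}{2n}$, which is precisely the quantity the paper later recomputes from scratch in the proof of its final theorem. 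The paper's local-exchange argument, by contrast, does not require the closed form and would adapt more readily to objective functions where no such closed form exists, but here it is strictly more work. The only point needing care in your write-up --- the identification $f(a_i)=C_k$ for $i$ in block $k$ --- is indeed correct under the paper's (slightly abused) notation, as confirmed by the paper's own evaluation of $\sum_i f(\bs_i)$ for the equal-blocks sequence.
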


\begin{proof}
We refer to the maximal length subsequence of entries with value $u \in \{1,\ldots n\}$ as \emph{block} $u$.
Consider two adjacent blocks $u$ and $u+1$, where block $u$ starts
at $i$ and block $u+1$ starts at $j$ and finishes at $k$.
The contribution of these two blocks to $S(a)$ is
$$
\sum_i^{j-1} (j-1) + \sum_j^k k = j^2 - (k+i)j + (i+k)\ .
$$
If $k+i$ is even, this contribution is minimized when $j=\frac{k+i}{2}$.
If $k+i$ is odd, this contribution is minimized for both values $j= \lfloor \frac{k+i}{2} \rfloor$ and $j=\lceil \frac{k+i}{2} \rceil$.

Now suppose for the sake of contradiction that $S(a)$ is minimized when $c_1=\cdots=c_n=t/n$
does not hold.
There are two possibilities.
Either there are two adjacent blocks whose lengths differ by more than one, in
which case we immediately have a contradiction.
If not, then it must be the case that all pairs of adjacent blocks differ in length
by at most one.
In particular, there must be a block of length $t/n+1$ and another of length $t/n-1$
with all blocks in between of length $t/n$.
Flipping the leftmost of these blocks with its right neighbor will not change the sum $S(a)$.
Repeatedly doing this until the blocks of lengths $t/n+1$ and $t/n-1$ are adjacent,
does not change $S(a)$.
Then we have two adjacent blocks that differ in length by more than one, which contradicts
the fact that $S(a)$ was minimized.
\end{proof}

%%%%%%%%%%%%%%%%%%%%%%%%%%%%%%%%%%%%%%%%%%%%%%%%%%%%%%%%%%%%%%%%%%%%%%%%%%%%%%%%%%%%%
\begin{theorem}
If $n | t$, the FP strategies $(a,b)$ are an $\epsilon^*$-equilibrium,
where $$\epsilon^* = \frac{1}{2}+\frac{1}{t}-\frac{1}{2n}\ .$$
\end{theorem}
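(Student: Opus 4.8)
The plan is to combine the inequality chain culminating in~\eqref{e:msbound} with the combinatorial characterization of the minimizer \bs supplied by the three preceding lemmata. The bound $\epsilon \le 1 + \frac{1}{t} - \frac{1}{t^2}\sum_{i=1}^t f(i)$ holds for \emph{every} FP sequence $a$ of length $t$, and the right-hand side is largest exactly when $S(a)=\sum_{i=1}^t f(a_i)$ is smallest. Hence, to obtain a worst-case guarantee, it suffices to substitute the minimum value $S(\bs)$ and check that the resulting quantity equals $\epsilon^*$.

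First I would pin down the exact shape of \bs. By Lemma~\ref{l:ndistinct} the minimizer uses exactly $n$ distinct values. By Lemma~\ref{l:transform} the transformation of Definition~\ref{d:transform} never increases $S$ and strictly decreases it unless the sequence is already in sorted block form; since \bs minimizes $S$ it must coincide with its own transform, so it is in block form. Finally, Lemma~\ref{l:equal} (using the hypothesis $n\mid t$) forces all $n$ blocks to have the common length $t/n$, giving
\[
\bs=\underbrace{1,\ldots,1}_{t/n},\underbrace{2,\ldots,2}_{t/n},\cdots,\underbrace{n,\ldots,n}_{t/n}.
\]

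Next I would evaluate $S(\bs)$ directly from this explicit layout. Block $u$ occupies positions $(u-1)\tfrac tn+1$ through $u\tfrac tn$, so each of its $t/n$ entries has last occurrence $f=u\tfrac tn$, and the block contributes $\tfrac tn\cdot u\tfrac tn=u\bigl(\tfrac tn\bigr)^2$. Summing over $u=1,\ldots,n$ yields
\[
S(\bs)=\Bigl(\tfrac tn\Bigr)^2\sum_{u=1}^n u=\Bigl(\tfrac tn\Bigr)^2\frac{n(n+1)}{2}=\frac{t^2(n+1)}{2n}.
\]
Substituting into~\eqref{e:msbound} gives $\epsilon\le 1+\tfrac1t-\tfrac{n+1}{2n}=\tfrac12+\tfrac1t-\tfrac1{2n}=\epsilon^*$, so $a$ is an $\epsilon^*$-best response against $b$.

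Finally, since the derivation of~\eqref{e:msbound} used only that payoffs lie in $[0,1]$ and that each $a_i$ is a best response to the corresponding prefix of $b$—properties symmetric in the two players—the identical argument with the roles of the sequences exchanged shows $b$ is an $\epsilon^*$-best response against $a$, completing the $\epsilon^*$-equilibrium claim. The only genuinely substantive content is the combinatorial optimization of $S$, which is already discharged by Lemmata~\ref{l:ndistinct}, \ref{l:transform} and~\ref{l:equal}; the remaining work—verifying that minimizing $S$ is the correct direction, computing $S(\bs)$, and checking that the algebra collapses to $\epsilon^*$—is routine, and I expect no real obstacle beyond keeping the arithmetic and geometric sums straight.
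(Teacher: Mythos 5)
Your proposal is correct and follows essentially the same route as the paper: apply the bound \eqref{e:msbound}, use Lemmata~\ref{l:ndistinct}, \ref{l:transform} and~\ref{l:equal} to reduce the minimization of $S$ to the equal-block-form sequence, compute $S(\bs)=t^2(n+1)/(2n)$, and invoke symmetry for the second player. The arithmetic matches the paper's exactly.
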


\begin{proof}
By symmetry, it suffices to show that \bs is an $\epsilon^*$-best-response against $b$.
Applying Lemma~\ref{l:ndistinct}, Lemma~\ref{l:transform} and Lemma~\ref{l:equal},
we have that
$$
\bs =
\underbrace{m_1,\ldots,m_1}_{t/n}, \underbrace{m_2,\ldots,m_2}_{t/n}, \cdots,\underbrace{m_n,\ldots,m_n}_{t/n},
$$
where $m_1,\ldots,m_n$ is an arbitrary labeling of player 1's pure strategies.
Using~\eqref{e:msbound}, we have that
\begin{align*}
\epsilon \le & \quad 1 + \frac1t - \frac1{t^2} \sum_{i=1}^t f(\bs_i) \\
=& \quad 1 + \frac1t - \frac1{t^2}  \frac{t}{n} \sum_{i=1}^n (\frac{i\cdot t}{n}) \\
=& \quad 1 + \frac1t - \frac{n+1}{2n} \\
=& \quad \frac12 + \frac1t - \frac{1}{2n} \\
\end{align*}
%Since
%\begin{align*}
%\epsilon^*=& \quad 1- (\frac{n+1}{2n}-\frac{1}{T})= \frac{1}{2}-\frac{1}{2n}+\frac{1}{T},
%\end{align*}
%We have shown that \bs is an $\epsilon^*$-best-response.
This concludes the proof.
\end{proof}
For $t$ superlinear in $n$, we asymptotically achieve a $(\frac{1}{2}-\frac{1}{2n})$-Nash equilibrium.
\section{Discussion}

Daskalakis et al.~\cite{DMP} gave a very simple algorithm that achieves an approximation
guarantee of $\frac{1}{2}$; subsequent algorithms e.g.~\cite{BBM,TS} improved on this, but at the expense
of being more complex and centralized, commonly solving one or more derived LPs
from the game. Our result suggests that further work on the topic might address the
question of whether $\frac{1}{2}$ is a fundamental limit to the approximation performance
obtainable by certain types of algorithms that are in some sense simple or decentralized.
The question of specifying appropriate classes of algorithms is itself challenging,
and is also considered in~\cite{DFPPV} in the context of algorithms that provably fail
to find Nash equilibria without computational complexity theoretic assumptions.

\end{document}